\title{The Limits of Pan Privacy and Shuffle Privacy \\ for Learning and Estimation}
\author{
    Albert Cheu\thanks{Khoury College of Computer and Information Sciences, Northeastern University.  Supported by NSF grants CCF-1750640, CNS-1816028, CNS-1916020.  \url{cheu.a@northeastern.edu}.}
    \and Jonathan Ullman\thanks{Khoury College of Computer and Information Sciences, Northeastern University.  Supported by NSF grants CCF-1750640, CNS-1816028, CNS-1916020.  \url{jullman@ccs.neu.edu}.}
}
\date{}
\begin{document}

\maketitle

\begin{abstract}
    There has been a recent wave of interest in intermediate trust models for differential privacy that eliminate the need for a fully trusted central data collector, but overcome the limitations of local differential privacy.  This interest has led to the introduction of the shuffle model (Cheu et al., EUROCRYPT 2019; Erlingsson et al., SODA 2019) and revisiting the pan-private model (Dwork et al., ITCS 2010).  The message of this line of work is that, for a variety of low-dimensional problems---such as counts, means, and histograms---these intermediate models offer nearly as much power as central differential privacy.  However, there has been considerably less success using these models for high-dimensional learning and estimation problems.  
    
    In this work we prove the first non-trivial lower bounds for high-dimensional learning and estimation in both the pan-private model and the general multi-message shuffle model.  Our lower bounds apply to a variety of problems---for example, we show that, private agnostic learning of parity functions over $d$ bits requires $\Omega(2^{d/2})$ samples in these models, and privately selecting the most common attribute from a set of $d$ choices requires $\Omega(d^{1/2})$ samples, both of which are exponential separations from the central model.  Our work gives the first non-trivial lower bounds for learning and optimization in both the pan-private and the general multi-message shuffle model.
\end{abstract}

\section{Introduction} \label{sec:intro}
The most widely accepted way to ensure individual privacy in the context of statistics and machine learning is \emph{differential privacy}~\cite{DworkMNS06}, which provides a strong guarantee that no individual user's data has a strong influence on the \emph{output} of the computation that are visible to the attacker.  Differentially private algorithms, however, are designed for a variety of different \emph{trust models} that determine what output is visible.  The strongest, and most commonly studied trust model is the \emph{central model}, in which a single party is entrusted to collect raw data from the users, runs a differentially private computation, and only the final output of this computation is visible.  On the other extreme, the weakest trust model is the \emph{local model}~\cite{KasiviswanathanLNRS08}, where we don't trust anyone to safeguard raw data, so each user applies differential privacy locally to their own data to compute a response, and each user's response is visible.  While the central model allows for many powerful algorithms, the local model is much less powerful (\cite{KasiviswanathanLNRS08, BeimelNO08, ChanSS12, DuchiJW13} et seq.) and significantly limits the accuracy of computations.

In principle there is no tradeoff between trust and power, as the user's can use cryptographic secure multiparty computation to implement any algorithm designed for the central model without any trusted party.  However, general-purpose secure multiparty computation has several drawbacks, such as large computation and communication costs, multiple rounds of interaction, and requiring all users to remain live throughout the computation.  Although there are more practical protocols implementing certain differentially private algorithms (\cite{DworkKMMN06} et seq.) so far these are restricted to relatively simple computations and are not practical for large-scale applications.

Thus, a recent focus has been on \emph{intermediate trust models} that offer some of the best features of both the central model and the local model.  Two models that have received significant attention are:
\begin{itemize}
    \item The \emph{shuffle model}~\cite{CheuSUZZ19,ErlingssonFMRTT19}.\footnote{More precisely, we consider a version of the shuffle model with an additional \emph{robustness} property~\cite{BalcerCJM20}.  Although the property is not without loss of generality, and has not always formalized in the literature, it is satisfied by all known natural shuffle protocols, and was one of the explicit motivations of studying the shuffle model~\cite{CheuSUZZ19}.  For brevity we use only the term ``shuffle model'' in the introduction, and defer more discussion of this issue to Section~\ref{sec:prelims}.}  In this model, users introduce randomness into their own data, as in the local model.  However the user's responses are then passed through a \emph{secure shuffler} so the responses are visible but not identified with individual users.  We consider the most general \emph{multi-message} shuffle model where each user can send multiple responses that are shuffled independently. An equivalent model would use secure aggregation to ensure that only a histogram of the responses is visible.  Secure shuffling and secure aggregation are significantly easier to achieve than general secure computation, and Google's \textsc{prochlo} system~\cite{Bittau+17} is a scalable realization of this model.
    
    \item The \emph{pan-private model}~\cite{DworkNPRY10}.  In this model, the users' data is processed in an online fashion by a central party.  We trust this central party to process the data but not to store it in perpetuity, so we assume that at any one point in the stream, the party's internal state may become visible.  This model captures, for example, a data collector who is well intentioned, and can be trusted to see raw data during process, but whose storage may be subject to breaches~\cite{AminJM20}.
\end{itemize}

We visualize the models in Figure~\ref{fig:models}. At first glance, these two models seem unrelated, however a recent result of Balcer, Cheu, Joseph, and Mao~\cite{BalcerCJM20} shows that, for a large class of problems that includes all the problems we study, any protocol in the shuffle model can be simulated in the pan-private model with only a small reduction in accuracy.  So for purposes of this work, we can think of these models as being ordered from least powerful to most powerful as
$
\textit{local} \preceq \textit{shuffle} \preceq \textit{pan-private} \preceq \textit{central}.
$

\begin{figure}[t!]
    \centering \hspace{-10pt}
    \includegraphics[width=0.55\textwidth]{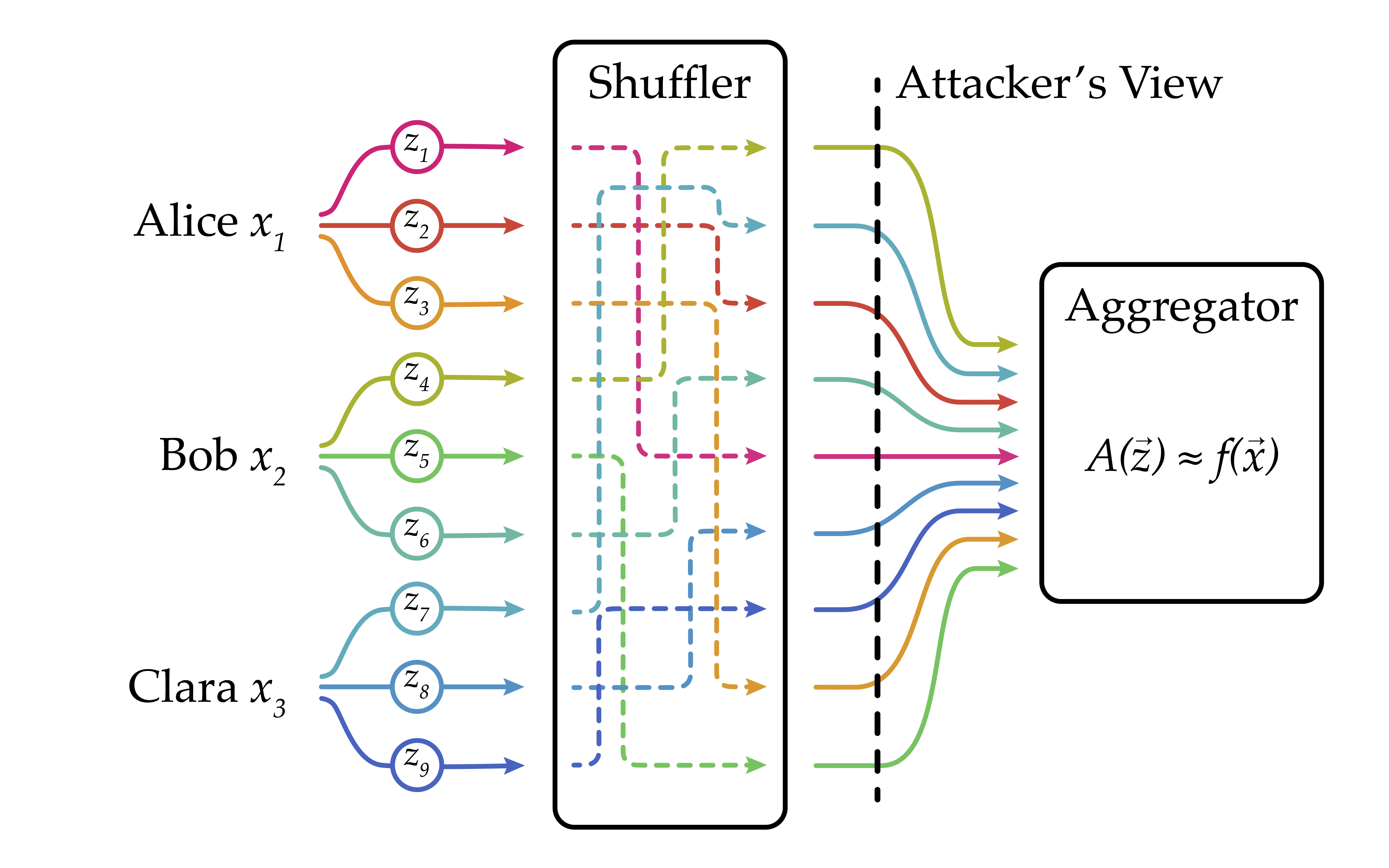} \hspace{30pt} 
    \includegraphics[width=0.35\textwidth]{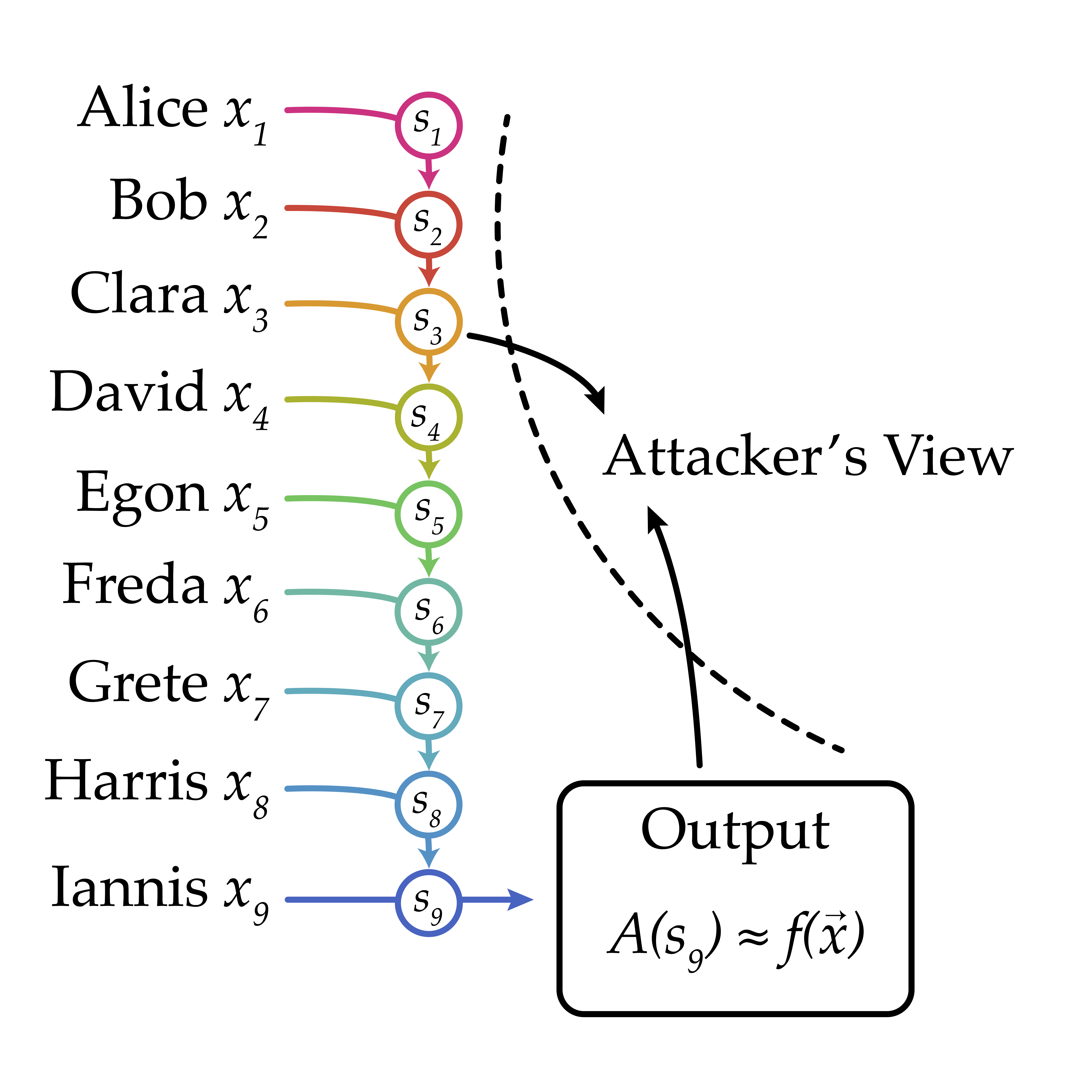}
    \caption{(Left) The multi-message shuffle model. The attacker's view consists of the entire set of messages, randomly shuffled.  (Right) The pan-privacy model.  The attacker's view consists of the output and the internal state at any single step, which is $s_3$ in this example.}
    \label{fig:models}
\end{figure}

Both the shuffle model (\cite{CheuSUZZ19, ErlingssonFMRTT19} et seq.) and the pan-private model~\cite{DworkNPRY10, MirMNW11, AminJM20} provably allow much greater accuracy than the local model, while also requiring weaker trust than the central model.  See Section~\ref{sec:rw} for a more specific overview of recent progress.  However, these positive results are mostly limited to relatively simple functionalities, such as computing means and histograms over the user's data.  We note that these are all problems that can be solved efficiently in the local model with reasonable, although larger, sample complexity.  However, for problems such as learning parities and selecting the most common attribute, where the local model where the local model is most severely limited~\cite{KasiviswanathanLNRS08,DuchiJW13,Ullman18,EdmondsNU20}, there is no evidence that either the pan-private or shuffle model can overcome these limitations. Our main contribution is to show  that these limitations are inherent:
\begin{quotation}
    \noindent\emph{For many high-dimensional learning and estimation problems, the shuffle and pan-private models incur an exponential cost in sample complexity relative to the central model.}
\end{quotation}
For those familiar with differential privacy, our results can be interpreted as the statement \emph{there is no analogue of the exponential mechanism in the pan-private or shuffle models}, as we prove lower bounds for problems that can be solved in the central model by applying the exponential mechanism.

Our specific lower bounds follow from a new general lower bound argument.  We note that the two most common lower bounds techniques for the local model cannot prove lower bounds for the pan-private and shuffle models, so our lower bounds cannot be proven by any straightforward extension of existing lower bound techniques.  Specifically, there is no non-trivial upper bound on the mutual information between the algorithm's inputs and outputs~\cite{BalcerC20}, so information-theoretic arguments~\cite{McGregorMPRTV10,DuchiJW13} do not apply.  Moreover, these models can solve problems that would requite infinitely many statistical queries to solve, so the simulation of the local model in the statistical query model~\cite{KasiviswanathanLNRS08} cannot be extended to these more general models.

\subsection{Results}

Our main results are lower bounds for many closely related learning and estimation problems in both the pan-privacy and shuffle models of differential privacy.  We note that throughout this work we adopt the standard model for studying privacy for distributional problems where we define the \emph{accuracy} goal with respect to input satisfying certain distributional assumptions, but define \emph{privacy} for a worst-case dataset.  We begin by highlighting two important cases of our results.

\mypar{Learning Parities.} In this canonical learning problem, we are given a dataset consisting of $n$ labeled examples $\{(x_i, y_i)\}$ sampled from some distribution $\bP$ over the domain $\pmo^d \times \pmo$.  The goal is to output a parity function $h_{S}(x) = \prod_{j \in S} x_j$ that predicts the labels nearly as well as any other parity function.  Namely, 
$$
\pr{(x,y) \sim \bP}{h_{S}(x) = y} \geq \max_{T} \pr{(x,y) \sim \bP}{h_{T}(x) = y} - \alpha.
$$
In the central model this problem can be solved privately to any constant level of accuracy with just $O(d)$ samples~\cite{KasiviswanathanLNRS08}, whereas in the local model any algorithm solving this problem requires $\Omega(2^d)$ samples~\cite{KasiviswanathanLNRS08,EdmondsNU20}.\footnote{For specificity, we state lower bounds for the \emph{non-interactive} local model of differential privacy, although, for every problem we consider, slightly weaker bounds are known to hold for interactive variants of the local model as well.}  We prove an exponential separation between the central model and the pan-privacy and shuffle models, showing that, for learning parities, these models are much more similar to the local model.
\begin{thm}(Informal)
	Any differentially private algorithm that leans parity functions to constant accuracy in the pan-privacy model or the shuffle privacy model requires $\Omega(2^{d/2})$ samples in the worst-case.
\end{thm}
We also consider learning \emph{sparse} parities, where our goal is to output some $k$-sparse parity function $h_{S}$, $|S|\leq k$ that competes with the best parity function on $k$ variables.  That is, 
$$
\pr{(x,y) \sim \bP}{h_{S}(x) = y} \geq \max_{T : |T| \leq k} \pr{(x,y) \sim \bP}{h_{T}(x) = y} - \alpha.
$$
We show that learning $k$-sparse parities requires $\Omega(\sqrt{\binom{d}{\leq k}})$ samples where $\binom{d}{\leq k}$ denotes the number of $k$-sparse parity functions on $d$ bits.

\mypar{Selection.} One of the most celebrated tools in central-model differential privacy is the \emph{exponential mechanism} of McSherry and Talwar~\cite{McSherryT07}, which is a very general and very accurate method for optimizing a Lipschitz loss function over a discrete set of choices.  The canonical problem solved by the exponential mechanism is the following \emph{selection} problem: given a dataset consisting of $n$ samples $\{x_i\}$ from some distribution $\bP$ over the domain $\{0,1\}^d$, select a coordinate $j$ such that the expected value of the $j$-th coordinate is as large as possible.  Namely,
$$
\ex{x \sim \bP}{x_j} \geq \max_{k} \ex{x \sim \bP}{x_k} - \alpha.
$$
In the central model, the exponential mechanism solves this problem to any constant level of accuracy with just $O(\log d)$ samples, whereas in the local model any algorithm solving this problem requires $\Omega(d \log d)$ samples~\cite{DuchiJW13, Ullman18}.  Again, we show an exponential separation between the central model and the pan-privacy and shuffle models, demonstrating that there is no general-purpose analogue of the exponential mechanism in these intermediate models.
\begin{thm}(Informal)
	Any differentially private algorithm that solves selection to constant accuracy in the pan-privacy model or the shuffle privacy model requires $\Omega(\sqrt{d})$ samples in the worst-case.
\end{thm}

\mypar{Variants of Differential Privacy.} We emphasize that all of our lower bounds hold for the most general variant of differential privacy, $(\eps,\delta)$-differential privacy for $\delta \leq 1/n^{1.1}$, and obtain lower bounds for this variant is one of the main technical challenges addressed by our work.  Thus, our results imply essentially the same lower bounds for pure differential privacy, concentrated differential privacy~\cite{DworkR16,BunS16}, truncated concentrated differential privacy~\cite{BunDRS18}, R\'{e}nyi differential privacy~\cite{Mironov17}, and Gaussian differential privacy~\cite{DongRS19}, none of which were known prior to our work.

\mypar{More Applications.} In our work we also prove tight lower bounds for several closely related, natural problems that have been studied in the literature on differential privacy:
\begin{itemize}
	\item \emph{Estimating $k$-Sparse Parities} for $1 \leq k \leq d$.  Here we are given samples $\{x_i\}$ from a distribution $\bP \in \pmo^d$, and the goal is to output a set of estimates $\{ a_{T} \}_{T \subseteq [d] \atop |T| \leq k}$ such that $$\left| a_{T} - \ex{x \sim \bP}{\prod_{j \in T} x_j} \right| \leq \alpha$$ for every $T$.
	
	\item \emph{$d$-wise Simple Hypothesis Testing}.  Here we are given samples $\{x_i\}$ from a distribution $\bP \in \cQ$ where $\cQ = \{\bQ_1,\dots,\bQ_d\}$ is a known set of $d$ hypotheses satisfying $\dtv(\bQ_{i},\bQ_{j}) \geq \alpha$, and the goal is to determine which of these distributions is $\bP$.
	
	\item \emph{1-Sparse Mean Estimation}. Here we are given samples $\{x_i\}$ from a distribution $\bP \in \pmo^d$ with mean $\mu$, with the promise that $\| \mu \|_0 = 1$, and the goal is to output $\hat\mu$ such that $\| \mu - \hat\mu \|_\infty \leq \alpha$.
\end{itemize}
We summarize our lower bounds and compare to the local and central models in Table~\ref{tab:main}.  We also stress that, while the focus of this work is on lower bounds and not algorithms, all of our lower bounds are easily seen to be tight up to logarithmic factors with respect to trivial statistical query algorithms~\cite{Kearns93} that can be implemented in both the pan-private and shuffle models of privacy.

\begin{table}[t!]
\centering
\begin{tabular}{|c|c|c|c|c|}
\hline
\textbf{Problem} &
  \textbf{Parameters} &
  \textbf{Local Privacy} &
  \textbf{\begin{tabular}[c]{@{}c@{}}Pan/Shuffle Privacy\\ (This Work)\end{tabular}} &
  \textbf{Central Privacy} \\ \hline
  &&&& \\[-11pt]
\begin{tabular}[c]{@{}c@{}}Learning \\ Parities\end{tabular} &
  \begin{tabular}[c]{@{}c@{}}Dimension $d$\\Sparsity $k$\end{tabular} &
  \begin{tabular}[c]{@{}c@{}}$\Omega(\binom{d}{\leq k} \log \binom{d}{\leq k})$\\ \cite{EdmondsNU20}\end{tabular} &
  \begin{tabular}[c]{@{}c@{}}$\Omega\left(\sqrt{\binom{d}{\leq k}}\right)$\\ Thms \ref{thm:learning-parities-pan}/\ref{thm:learning-parities-shuffle}\end{tabular} &
  \begin{tabular}[c]{@{}c@{}}$O(\log \binom{d}{\leq k})$\\ \cite{KasiviswanathanLNRS08}\end{tabular} \\ \hline
    &&&& \\[-8pt]
Selection &
  Dimension $d$ &
  \begin{tabular}[c]{@{}c@{}}$\Omega(d \log d)$\\ \cite{DuchiJW13}\end{tabular} &
  \begin{tabular}[c]{@{}c@{}}$\Omega(\sqrt{d})$\\ Thms \ref{thm:selection-pan}/ \ref{thm:selection-shuffle}\end{tabular} &
  \begin{tabular}[c]{@{}c@{}}$O(\log d)$\\ \cite{McSherryT07}\end{tabular} \\[12pt] 
  \hline
  &&&& \\[-11pt]
\begin{tabular}[c]{@{}c@{}}Estimating \\ Parities\end{tabular} &
  \begin{tabular}[c]{@{}c@{}}Dimension $d$\\ Sparsity $k$\end{tabular} &
  \begin{tabular}[c]{@{}c@{}}$\Omega(\binom{d}{\leq k} \log \binom{d}{\leq k})$\\ \cite{EdmondsNU20}\end{tabular} &
  \begin{tabular}[c]{@{}c@{}}$\Omega\left(\sqrt{\binom{d}{\leq k}}\right)$\\ Thms \ref{thm:estimate-parities-pan}/\ref{thm:estimate-parities-shuffle}\end{tabular} &
  \begin{tabular}[c]{@{}c@{}}$\tilde{O}(\sqrt{d} \log \binom{d}{\leq k})$\\ \cite{HardtR10}\end{tabular} \\ \hline
  &&&& \\[-8pt]
\begin{tabular}[c]{@{}c@{}}$d$-Wise Simple \\ Hypothesis Testing\end{tabular} &
  $d$ Hypotheses &
  \begin{tabular}[c]{@{}c@{}}$\Omega(d \log d)$\\ \cite{GopiKKNWZ20}\end{tabular} &
  \begin{tabular}[c]{@{}c@{}}$\Omega(\sqrt{d})$\\ Thms \ref{thm:hyp-selection-pan}/\ref{thm:hyp-selection-shuffle}\end{tabular} &
  \begin{tabular}[c]{@{}c@{}}$O(\log d)$\\ \cite{BunKSW19}\end{tabular} \\[12pt] \hline
  &&&& \\[-8pt]
\begin{tabular}[c]{@{}c@{}} 1-Sparse\\ Mean Estimation\end{tabular} &
  Dimension $d$ &
  \begin{tabular}[c]{@{}c@{}}$\Omega(d \log d)$\\ \cite{DuchiJW13}\end{tabular} &
  \begin{tabular}[c]{@{}c@{}}$\Omega(\sqrt{d})$\\ Thms \ref{thm:sparse-mean-pan}/\ref{thm:sparse-mean-shuffle}\end{tabular} &
  \begin{tabular}[c]{@{}c@{}}$O(\log d)$ \\ $[$Folklore$]$ \end{tabular} \\[12pt] \hline
\end{tabular}
\caption{
    Summary of our sample-complexity lower bounds for the pan-privacy and shuffle privacy models, in comparison to the local and central models.  For brevity, all lower bounds are stated for accuracy $\alpha = 1/100$ and $(1,n^{-2})$-differential privacy.  See the formal theorems for more general statements.  All our lower bounds are tight up to polylogarithmic factors.  We use the notation $\binom{d}{\leq k} = \sum_{i = 1}^{k} \binom{d}{i}$.  
}
\label{tab:main}
\end{table}

\subsection{Techniques}
Our results are all a consequence of a very general lower bound for algorithms in these models.  For simplicity, we will restrict this discussion to pan-private algorithms, as lower bounds for shuffle privacy will then follow from a general transformation from the shuffle model to the pan-privacy model due to Balcer, Cheu, Joseph, and Mao~\cite{BalcerCJM20}.  Also, in this discussion we will ignore the parameter $\delta$ for brevity, but, crucially, our results apply for moderately small $\delta > 0$.

Let $\{\bP_{v}\}_{v \in \cV}$ be some family of distributions over the domain $\cX$, let $V$ be uniform over $\cV$, and let
$$
\bU = \ex{v \sim V}{\bP_{v}}
$$
be the uniform mixture of these distributions.  We will give lower bounds that show no $(\eps,\delta)$-differentially private algorithm in the pan-private or shuffle models can distinguish $n$ i.i.d.\ samples from $\bU^n$ from data drawn from the mixture $\bP_V^n$, where we chose $v \sim V$ uniformly and then sample from $\bP_v^n$.  We will, of course, choose the family $\{\bP_{v}\}$ so that any algorithm solving one of the problems above, must distinguish $\bU^n$ from $\bP_V^n$, which is how we will obtain sample-complexity lower bounds.

For background, let's recap the way to use this setup to prove lower bounds in the (non-interactive) local model of differential privacy.  Here, one chooses the data from the mixture $\bP_{V}^n$, and a lemma of Duchi, Jordan, and Wainwright~\cite{DuchiJW13} gives a bound on the mutual information between the output of the protocol $\Pi$ and the identity of the random mixture component $V$:
\begin{equation} \label{eq:djw}
I(\Pi(\bP_V^n) ; V) = O(n \cdot \eps^2 \cdot \| \{ \bP_{v} \} \|_{\infty \to 2}^2)
\end{equation}
where
\begin{equation*}
\|\{\bP_v\}\|_{\infty \to 2}^2 = \sup_{f \from \cX \to [\pm 1]} \ex{v \sim V}{\left( \ex{x \sim \bP_v}{f(x)} - \ex{x \sim \bU}{f(x)}  \right)^2}
\end{equation*}
is the crucial quantity determining how hard these distributions are to distinguish subject to local differential privacy.  For intuition, note that this quantity satisfies the relationship
$$
\|\{\bP_v\}\|_{\infty \to 2}^2 \leq \ex{v \sim V}{\sup_{f : \cX \to [\pm 1]} \left( \ex{x \sim \bP_v}{f(x)} - \ex{x \sim \bU}{f(x)}  \right)^2} = 4 \cdot \ex{v \sim V}{ \dtv(\bP_v, \bU)^2 },
$$
but it can be much smaller than $4 \cdot \mathbb{E}_{v \sim V}(\dtv(\bP_v, \bU)^2)$, which is crucial for proving tight lower bounds.

Given this lemma, and a construction of a hard distribution family such that $\|\{\bP_v\}\|_{\infty \to 2}^2$ is small, it is not hard to deduce a lower bound on the number of samples $n$ required to identify the specific mixture component $V$.  It's also not too difficult to construct a family of hard distributions for all of our problems of interest (see Section~\ref{sec:hard-distributions}).  We note that all of the lower bounds in the ``local model'' column of Table \ref{tab:main} are proven via this approach.

With this state-of-affairs, it's tempting to try to argue that a mutual-information bound analogous to \eqref{eq:djw} holds for pan-private or shuffle model algorithms.  However, Balcer and Cheu~\cite{BalcerC20} constructed a family of distributions and a pan-private algorithm such that the mutual information $I(\Pi; V)$ can be unbounded, showing that the purely information-theoretic approach used to prove lower bounds for the local model cannot work for pan-privacy.\footnote{The algorithm showing pan-private algorithms can have unbounded mutual information crucially uses the full generality of $(\eps,\delta)$-differential privacy for $\delta > 0$, however, even for stricter variants of differential privacy where the mutual information is bounded, we don't know how to obtain a mutual-information bound as strong as~\eqref{eq:djw} for any of these variants.}

Nonetheless, we prove the following indistinguishability lemma for pan-private algorithms:
\begin{equation} \label{eq:main-lem}
\dtv(\Pi(\bU^n), \Pi(\bP_{V}^n)) \leq O(n \cdot \eps \cdot \|\{\bP_v\}\|_{\infty \to 2})
\end{equation}
Although this bound is quantitatively somewhat weaker than \eqref{eq:djw}---in ways that are actually crucial to avoid proving false statements---it is nonetheless sufficient to give tight lower bounds for all of the problems we consider.  The value of this lemma is that, even though the information-theoretic bounds that are used in the local model are false for the pan-private model, the exact same constructions of hard distributions can be used to obtain lower bounds for pan-privacy!  

The proof of this lemma uses a hybrid argument, where we transition between data sampled from $\bU^n$ and data sampled from $\bP_V^n$.  Namely, we fix a value of $i$ between $0$ and $n$ and consider the case where the first $i$ inputs are sampled from $\bU^i$ and the remaining $n-i$ inputs are sampled from $\bP^{n-i}$.  We then bound the total variation distance between the $i$-th case and the $(i+1)$-st case and apply the triangle inequality.  In each step, we carefully argue that the total variation distance between the two cases follows from a careful application of \eqref{eq:djw} to the algorithm that computes the internal state after viewing the first $i$ inputs, which is why we ultimately get a bound of a similar form.

\subsection{Related Work} \label{sec:rw}

\mypar{Comparison to the Concurrent Works of~\cite{ChenGKM20} and ~\cite{BeimelHNS20}.}
A concurrent and independent work of Chen, Ghazi, Kumar, and Manurangsi~\cite{ChenGKM20} proves lower bounds for selection and learning parity in the multi-message shuffle model.  Their lower bounds depend on the number of messages, and are only non-trivial when the number of messages is relatively small, whereas our lower bounds do not require any bound on the number of messages.  For example, their lower bound for selection is $\Omega(d/m)$, where $m$ is the number of messages, while our lower bound for selection is $\Omega(\sqrt{d})$ for any number of messages, and our lower bound is matched by a trivial algorithm that sends $d$ messages.  Compared to ours, their lower bounds do not require the shuffle protocol to be robust, although robustness was a motivating feature of the shuffle model that is discussed in the early work on the subject~\cite{CheuSUZZ19, ErlingssonFMRTT19}.  Their work also does not consider the pan-privacy model, and their arguments do not seem to apply to that model.

Another concurrent and independent work of Beimel, Haitner, Nissim, and Stemmer~\cite{BeimelHNS20} also proves lower bounds for multi-message shuffle protocols that use a small number of messages.  They show that if an $m$-message shuffle protocol is private when run with for $n$ users, then each user's messages reveals at most $\approx n^m$ bits of information about their input, which allows them to prove non-trivial lower bounds when $m$ is quite small.

\mypar{The Shuffle Model.} The shuffle model was introduced concurrently in works by Cheu et al.~\cite{CheuSUZZ19} and Erlingsson et al.~\cite{ErlingssonFMRTT19}.  These works were both inspired by Google's \textsc{prochlo} system~\cite{Bittau+17}, which implements a more general algorithmic paradigm called \emph{encode, shuffle, and analyze}.  Much of the work in this model has focused on constructing optimal algorithms for problems like binary sums~\cite{CheuSUZZ19, GhaziGKMPV20}, real-valued sums~\cite{BalleBGN19,GhaziPV19,GhaziMPV20,GhaziKMP20,BalleBGN20}, histograms and heavy-hitters~\cite{CheuSUZZ19, BalcerC20, GhaziGKPV20}, and uniformity testing~\cite{BalcerCJM20}.  Another complementary set of works have given general \emph{amplification theorems} showing that if each user applies a differentially private randomizer to their data, then the shuffle protocol using the randomizer satisfies differential privacy with stronger parameters \cite{BalleBGN19,ErlingssonFMRTT19}.

Almost all prior lower bounds for the shuffle model apply only to a special case of the model where each user sends only a single response, the so-called \emph{single-message shuffle model}.  Cheu et al.~\cite{CheuSUZZ19} showed that if a protocol is private in this restricted model, then each user's response satisfies local differential privacy, for which we already have strong lower bounds.  Their approach was refined by Ghazi et al.~\cite{GhaziGKPV20}, who obtained stronger bounds for single-message protocols.  Balle et al.~\cite{BalleBGN19} proved a lower bound for computing real-valued sums in the single-message model.  In contrast, our lower bounds hold for the general \emph{multi-message shuffle model}, where each user may send an arbitrary number of messages that are shuffled independently.  Note that in this model, the user's individual responses need not satisfy any local differential privacy~\cite{BalcerC20}.  An early lower bound for the multi-message shuffle model is due to Ghazi et al.~\cite{GhaziGKMPV20}, and applies to computing binary sums subject to pure differential privacy and a strong communication constraint.  We emphasize that our lower bounds do not impose any restriction on the number of messages or the amount of communication.

\mypar{The Pan-Private Model.} The pan-privacy model was introduced by Dwork et al.~\cite{DworkNPRY10} as a model of differential privacy for streaming algorithms, and they constructed pan-private algorithms for classic streaming problems like distinct elements.  Their algorithm was subsequently improved by Mir et al.~\cite{MirMNW11}, who also gave the first lower bounds for this model.  We note that their technique gives lower bounds for worst-case inputs, whereas our technique gives lower bounds for distributional problems.

More recently, Amin, Joseph, and Mao~\cite{AminJM20} revisited the model from the perspective of finding an intermediate trust model between local and central privacy, which is the perspective we adopt in this work.  They also gave an algorithm for \emph{uniformity testing} and a matching lower bound for algorithms satisfying pure differential privacy, which is $(\eps,\delta)$-privacy with $\delta = 0$.  Theirs is the first lower bound in this model for any distributional problem.  As we discussed above, their information-theoretic arguments are inherently limited to pure differential privacy, whereas ours apply to differential privacy in general.

The initial work on pan-privacy considered a more general model where the attacker can view the internal state at two or more arbitrary steps, however~\cite{AminJM20} showed that this model is equivalent to the local model with sequential interaction.  Our lower bounds apply to the weakest model, where the attacker can view the state at just a single time step.

\mypar{Lower Bounds Techniques in the Local and Central Model.} We briefly summarize the techniques for proving lower bounds in the more well studied models of differential privacy.  The first lower bounds for local differential privacy were proven by Kasiviswanathan et al.~\cite{KasiviswanathanLNRS08}, who proved that the local model is equivalent, up to polynomial factors, to the statistical queries model~\cite{Kearns93}.  Balcer and Cheu~\cite{BalcerC20} showed that the shuffle and pan-private model do not admit such a characterization.  Recently Edmonds, Nikolov, and Ullman~\cite{EdmondsNU20} gave a nearly tight characterization of the sample complexity of query release and agnostic learning in the non-interactive local model.  Subsequent work gave stronger lower bounds for specific problems in the local model~\cite{BeimelNO08, ChanSS12, DuchiJW13, BassilyS15, JosephKMW18, DuchiR18, DuchiR19, JosephMNR19}, including interactive variants of the local model.  This line of work primarily uses information-theoretic arguments that were first introduced by McGregor et al.~\cite{McGregorMPRTV10} in the context of two-party differential privacy.  However, these approaches cannot give strong lower bounds for the pan-private and shuffle model~\cite{BalcerC20}, and the main novelty in our work is finding strong lower-bound arguments for these intermediate models that do not require strong information bounds.

There are two main approaches to proving lower bounds for high-dimensional problems in the central model of differential privacy.  The first are reconstruction attacks, introduced by Dinur and Nissim (\cite{DinurN03} et seq.).  These attacks only apply when computing some statistics to very high accuracy, and thus cannot give non-trivial lower bounds for distributional problems where the accuracy can never be smaller than the sampling error.  The other main approach is based on tracing attacks (\cite{BunUV14, DworkSSUV15, SteinkeU17} et seq.).  Although tracing attacks give tight lower bounds for the central model, but the lower bounds we prove for more restricted models are exponentially larger, and do not seem to be provable using tracing attacks.  We refer the reader to~\cite{DworkSSU17-arsia} for a survey of these attacks lower bounds.
\section{Preliminaries} \label{sec:prelims}

\subsection{Notational Conventions}
We use boldface letters denote probability distributions, capital letters in plain math text denote random variables, and calligraphic letters denote sets. We reserve $M$ for randomized algorithms and $\Pi$ for distributed protocols.  Throughout this work, we use the notation $[k] := \{1, 2, \ldots, k\}$. 

\subsection{Differential Privacy}

We define a \emph{dataset} $\vec{x} \in \cX^n$ to be an ordered tuple of $n$ rows where each row is drawn from a data universe $\cX$ and corresponds to the data of one user. Two datasets $\vec{x},\vec{x}\,' \in \cX^n$ are \emph{neighbors}, denoted as $\vec{x} \sim \vec{x}\,'$, if they differ in at most one row.

\begin{defn}[Differential Privacy \cite{DworkMNS06}]
An algorithm $M: \cX^n \rightarrow \cR$ satisfies \emph{$(\eps, \delta)$-differential privacy} if, for every pair of neighboring datasets $\vec{x}$ and $\vec{x}\,'$ and every event $\cC \subseteq \cR$,
	$$\pr{}{M(\vec{x}\vphantom{'}) \in \cC} \le e^\eps \cdot \pr{}{M(\vec{x}\,') \in \cC} + \delta.$$
\end{defn}

The \emph{central model} of differential privacy refers to the case where the algorithm $M$ is allowed to depend arbitrarily on $\vec{x}$ with no further restrictions.
\subsection{The Pan-Private Model}

A \emph{pan-private} algorithm observes the data as a \emph{stream}.  At each step, the algorithm receives a datapoint that it uses to update its internal state, and this process repeats until the stream is exhausted and a final output is computed.  We say that two streams $\vec{x}$ and $\vec{x}\,'$ are \emph{neighbors} if they differ in at most one element.  Pan-privacy models an attacker who observes the final output of the algorithm, as well as the internal state at any one step in the stream, and requires that the joint distribution of these two pieces of information is differentially private.

\begin{defn}[Online Algorithm]
\label{def:online}
An \emph{online algorithm} $M$ is defined by a sequence of internal algorithms $M_1,M_2,\dots$ and an output algorithm $M_{\Ou}$. On input $\vec{x}$, the first function $M_1 \from \cX \to \In$ maps $x_1$ to a state $s_1$ and the remaining functions $M_{i}$ map $x_{i}$ and the previous state $s_{i-1}$ to a new state $s_i$. At the end of the stream, $M$ publishes a final output by executing $M_{\Ou}\from \In \to \Ou$ on its final internal state.
\end{defn}

\begin{defn}[Pan-privacy \cite{DworkNPRY10,AminJM20}]
\label{def:pan}
	Given an online algorithm $M$, let $M_{\In}(\vec{x})$ denote its internal state after processing stream $\vec{x}$, and let $\vec{x}_{\leq t}$ be the first $t$ elements of $\vec{x}$. We say $M$ is \emph{$(\eps,\delta)$-pan-private} if, for every pair of neighboring streams $\vec{x}$ and $\vec{x}\,'$, every time $t$ and every set of internal state, output state pairs $T \subseteq \In \times \Ou$,
    \begin{equation}
    \label{eq:pan}
    \pr{M}{\big(M_{\In}(\vec{x}_{\leq t}), M_{\Ou}(M_{\In}(\vec{x}))\big) \in T} \leq e^\eps\cdot\pr{M}{\big(M_{\In}(\vec{x}\,'_{\!\leq t}), M_{\Ou}(M_{\In}(\vec{x}\,'))\big) \in T} + \delta.
    \end{equation}
    See Figure \ref{fig:models} for a diagram.
\end{defn}

Note that any pan-private algorithm can trivially be implemented in the central model.
Our definition of pan-privacy is the specific variant given by Amin et al. \cite{AminJM20}. This version guarantees record-level privacy (uncertainty about the presence of any single stream element) rather than user-level privacy (uncertainty about the presence of any one data universe element). We use this variant because for the problems we consider it is natural to model each user as contributing a single element of the stream.

Lastly, note that when we consider 

\subsection{The Shuffle Model}
In the \emph{shuffle model}, each user individually randomizes their own data to produce a series of messages.  Unlike the local model, where these messages would be identified with the user who produced them, we allow the users to send their messages to a \emph{secure shuffler} that collects all the messages of all the users and randomly permutes them.\footnote{See \cite{Bittau+17} for a discussion of various choices of how to implement such a secure shuffler.}  The shuffle model captures an attacker who observes the messages after they are shuffled, and we require this shuffled set of messages to satisfy differential privacy.  An equivalent model would allow the attacker observes only a histogram of the messages.


\begin{defn}[Shuffle Model \cite{CheuSUZZ19}]
A protocol $\Pi$ in the \emph{shuffle model} consists of three randomized algorithms:
\begin{itemize}
\item
    A \emph{randomizer} $\Pi_R: \cX \rightarrow \cY^*$ mapping data to (possibly variable-length) vectors. The length of the vector is the number of messages sent. If, on all inputs, the probability of sending a single message is 1, then the protocol is said to be \emph{single-message}. Otherwise, the protocol is \emph{multi-message}.
\item
    A \emph{shuffler} $\Pi_S: \cY^* \to \cY^*$ that applies a uniformly random permutation to all messages. 
\item
    An \emph{analyzer} $\Pi_A: \cY^* \rightarrow \Ou$ that computes on a permutation of messages.
\end{itemize}
As the shuffler is the same in every protocol, we identify each shuffle protocol by $\Pi=(\Pi_R,\Pi_A)$. We define the honest execution on input $\vec{x}\in\cX^n$ as
$$
\Pi(\vec{x}) := \Pi_A(\Pi_S(\Pi_R(x_1), \dots, \Pi_R(x_n))).
$$
We denote the output of the shuffler as
$$
(\Pi_S\circ \Pi_R^n)(\vec{x}) := \Pi_S(\Pi_R(x_1), \dots, \Pi_R(x_n)).
$$
We assume that users and the analyzer have access to $n$, as well as an arbitrary amount of public randomness.
\end{defn}

It remains to define differential privacy in this model.  We note that the output of the shuffler only follows the distribution $\Pi(\vec{x})$ if all users are following the protocol as specified.  This assumption is undesirable because it means each user is reliant on other users to behave correctly.  Thus we consider a \emph{robust} variant of the shuffle model, where we require that the protocol remains private when only a constant fraction of users behave correctly, while the other users may behave arbitrarily.  We emphasize all known natural protocols in this model satisfy the additional robustness condition, and the need for robustness was explicitly discussed in~\cite{CheuSUZZ19} as a feature of the model, so we consider the robust variant to be the most appropriate version of the model.
\begin{defn} [Robust Shuffle Differential Privacy \cite{BalcerCJM20}]
\label{def:robust_shuffle_dp}
	Fix $\gamma\in(0,1]$. A protocol $\Pi=(R,A)$ is \emph{$(\eps,\delta, \gamma)$-robustly shuffle differentially private} if, for all $n\in\N$ and $\gamma' \geq \gamma$, the algorithm $\Pi_S \circ \Pi_R^{\gamma' n}$ is $(\eps, \delta)$-differentially private. In other words, $\Pi$ guarantees $(\eps, \delta)$-shuffle privacy whenever at least a $\gamma$ fraction of the intended number of users follow the protocol.
\end{defn}

We remark that the above definition only explicitly handles drop-out attacks, where malicious users send no messages. However, dropping out is the worst malicious users can do.  Combining arbitrary messages from malicious users with the messages of honest users can be viewed as a post-processing of $\Pi_S\circ \Pi_R^{\gamma n}$. If $\Pi_S\circ \Pi_R^{\gamma n}$ is already differentially private for the outputs of the $\gamma n$ users alone, then differential privacy's resilience to post-processing ensures that adding other messages does not affect this guarantee. Hence, it is without loss of generality to focus on drop-out attacks.

\subsection{From Robust Shuffle Privacy to Pan-Privacy}
\cite{BalcerCJM20} prove a reduction from robust shuffle privacy to pan-privacy in the context of uniformity testing and counting distinct elements. Here, we note that the technique can be applied to essentially any distributional problem, so we state it as a standalone theorem.  Using this theorem we will be able to obtain lower bounds for the shuffle model from those we prove for the pan-private model.

We begin by establishing some notation. For any universe $\cX$, let $\bU$ denote any fixed distribution over $\cX$. For any distribution $\bP$ over $\cX$ and any $b\in[0,1]$, let $\bP_{(b)}$ denote the mixture $b\cdot \bP+ (1-b)\cdot \bU$.

\begin{thm}[Generalization of \cite{BalcerCJM20}]
\label{thm:shuffle-to-pan}
For any $n$ and any $(\eps,\delta,1/3)$-robustly shuffle private protocol $\Pi$, there exists an $(\eps,\delta)$-pan-private algorithm $M^\Pi$ such that
\begin{equation}
\label{eq:equivalence}
    \dtv(M^\Pi(\bU^{n/3}), \Pi(\bU^n))=0    
\end{equation}
and, for any $\bP$ over $\cX$,
\begin{equation}
\label{eq:dilution}
    \dtv(M^\Pi(\bP^{n/3}), \Pi(\bP^n_{(2/9)}) ) < \exp(-\Omega(n)).  
\end{equation}
In particular, if $n$ is larger than some absolute constant, $\dtv(M^\Pi(\bP^{n/3}), \Pi(\bP^n_{(2/9)}) ) < 1/6$.
\end{thm}

\begin{algorithm}
\caption{$M^\Pi$, an online algorithm built from a shuffle protocol}
\label{alg:shuffle-to-pan}

\KwIn{Data stream $\vec{x}\in \cX^{n/3}$; a shuffle protocol $\Pi=(\Pi_R,\Pi_A)$ that expects $n$ inputs}

Create initial state $S_0 \gets (\Pi_S \circ \Pi_R^{n/3})(\bU^{n/3})$

Sample $N' \sim \Bin(n,2/9)$

Set $N' \gets \min(N',n/3)$

\For{$i\in[n/3]$}{
    \lIf{$i \leq N'$}{ $W_i \gets x_i$}
    \lElse{$W_i \sim \bU$}
    
    Create the state $S_i$ by shuffling the messages from $S_{i-1}$ with those from $\Pi_R(W_i)$
}

Create $\vec{Y}$ by shuffling the messages from $S_{n/3}$ with those from $\Pi_R^{n/3}(\bU^{n/3})$

\Return{$\Pi_A(\vec{Y})$}

\end{algorithm}

\begin{proof}
We present a concise version of $M^\Pi$ in Algorithm \ref{alg:shuffle-to-pan}. Although it does not explicitly take the form specified by Definition \ref{def:online}, it is straightforward to decompose it into a sequence of algorithms $$(M_1,\dots,M_{n/3},M_\Ou).$$

\mypar{Pan-privacy:} For any user $i$ and intrusion time $t$, we prove that $\big(M^\Pi_{\In}(\vec{x}_{\leq t}), M^\Pi_{\Ou}(M^\Pi_{\In}(\vec{x}))\big)$---the adversary's view---is $(\eps,\delta)$-differentially private \emph{conditioned on arbitrary event $N' = n'$}. If $i > n'$, observe that the algorithm is completely independent of $x_i$. Otherwise, we shall leverage the robust privacy of $\Pi$.

We first consider the case where $t < i$. The state observed by the adversary, $S_t$, is independent of $i$ so it will suffice to prove that $M^\Pi_{\Ou}(M^\Pi_{\In}(\vec{x}))$ is differentially private \emph{conditioned on any event $S_t = s_t$}. Note that $M^\Pi_{\Ou}(M^\Pi_{\In}(\vec{x}))$ is obtained by running $\Pi_A$ on the union of $s_t$ and
\begin{equation}
\label{eq:shuffle-to-pan}
(\Pi_S\circ \Pi^{h}_R )(x_{t+1}, \dots, x_i, W_{i+1}, \dots, W_{n/3}, \underbrace{\bU, \dots, \bU}_{n/3~\textrm{terms}} ),    
\end{equation}
where $h = 2n/3 - t \geq n/3$. We can therefore invoke the robust shuffle privacy of $\Pi$.

Now we consider the case where $t \geq i$. Observe that $M^\Pi_{\In}(\vec{x}_{\leq t})$ is equivalent to
$$
(\Pi_S \circ \Pi^h_R)(\underbrace{\bU_\cX, \dots, \bU_\cX}_{n/3~\textrm{terms}},x_1,\dots,x_i, W_{i+1}\dots, W_t),
$$
where $h = n/3 + t > n/3$. We again invoke the robust shuffle privacy of $\Pi$. And, conditioned on any event $M^\Pi_{\In}(\vec{x}_{\leq t}) = s_t$, we argue that $M^\Pi_{\Ou}(M^\Pi_{\In}(\vec{x}))$ is independent of $x_i$. This follows from our previous observation that $M^\Pi_{\Ou}(M^\Pi_{\In}(\vec{x}))$ is obtained by running $\Pi_A$ on the union of $s_t$ and \eqref{eq:shuffle-to-pan}; $x_i$ is not an input to this function.

\mypar{Bound on TV distance:} In the case where the input $\vec{X}$ is drawn from $\bU^{n/3}$, observe that every execution of $\Pi_R$ made by $M^\Pi$ is on an independent sample from $\bU$. Because the output of the algorithm is obtained by running $\Pi_A$ on $n$ such executions, we immediately have $M^\Pi(\bU^{n/3}) = \Pi(\bU^n)$.

Otherwise, consider $n$ samples from $\bP_{(2/9)}$. The number of samples drawn from $\bP$ is distributed as $\Bin(n,2/9)$. By Hoeffding's bound, $\pr{}{\Bin(n,2/9) > n/3} < \exp(-\Omega(n))$. Thus the TV distance between $\Bin(n,2/9)$ and the distribution of $N'$ is at most $\exp(-\Omega(n))$. In turn, the TV distance between
\[
\Pi(\bP^n_{(2/9)})= \Pi_A(\Pi_S(\overbrace{ \underbrace{\Pi_R(\bP), \dots, \Pi_R(\bP)}_{\Bin(n,2/9) ~\textrm{terms}}, \Pi_R(\bU),\dots,\Pi_R(\bU) }^{n~\textrm{terms}}))
\]
and
\[
M^\Pi(\bP^{n/3})= \Pi_A(\Pi_S(\overbrace{ \underbrace{\Pi_R(\bP), \dots, \Pi_R(\bP)}_{N' ~\textrm{terms}}, \Pi_R(\bU),\dots,\Pi_R(\bU) }^{n~\textrm{terms}} ))
\]
is at most $\exp(-\Omega(n))$ as well. This concludes the proof.
\end{proof}
\section{Main Lower Bound} \label{sec:main-lb}

Let $M$ be a pan-private algorithm.  Let $\{\bP_{v}\}_{v \in \cV}$ be a family of distributions, $V$ be uniform over $\cV$, and $\bU = \Ex_{v \sim V}(\bP_{v})$ be the uniform mixture over the distributions.  Let $\bU^n$ be the product distribution consisting of $n$ copies of $\bU$ and let $\bP^{n}_{V} = \Ex_{v \sim V}(\bP_{v}^n)$ be the mixture of product distributions.  Note that $\bU = \bP_{V}^1$.

An important quantity that we will show measures how hard it is for pan-private algorithms to distinguish $\bU^n$ from $\bP_V^n$ is the \emph{$(\infty \mathord{\to} 2)$-norm}\footnote{We call this quantity the $(\infty \mathord{\to} 2)$-norm because it is equal to the better known $(\infty \mathord{\to} 2)$-norm, $\sup_{z} \| M z \|_2 / \|z\|_{\infty}$, of the matrix $M$ defined by $M_{v,x} = \bP_v(x) - U(x)$.} of $\{\bP_{v}\}$, which defined as
$$
\|\{\bP_v\}\|_{\infty \to 2} = \sup_{f \from \cX \to [\pm 1]} \ex{v \sim V}{\left( \ex{x \sim \bP_v}{f(x)} - \ex{x \sim U}{f(x)}  \right)^2}^{1/2}
$$

The main goal of this section is to prove the following theorem.
\begin{thm} \label{thm:main-lb}
	If $\{\bP_v\}_{v \in \cV}$ is a family of distributions and $M$ is an $(\eps,\delta)$-pan private algorithm such that \footnote{We use $x \ll y$ to indicate that $x \leq c y$ for a sufficiently small numerical constant $c > 0$.} $\delta \log\nicefrac {|\cV|}{\delta} \ll \eps^2 \| \{\bP_{v}\} \|_{\infty \to 2}^{2}$ and $\dtv(M(\bP_V^n), M(\bU^n))$ is larger than a positive constant, 
	then
	$$
	n \geq \Omega\left(\frac{1}{\eps \| \{\bP_v\} \|_{\infty \to 2}}\right)
	$$
	More generally,
	$
	n \geq 1/O(\eps \| \{ \bP_v \} \|_{\infty \to 2} + \sqrt{\delta \log \nicefrac{|\cV|}{\delta}})
	$
\end{thm}

The main tool we use to prove Theorem~\ref{thm:main-lb} is the following information inequality.
\begin{lem} \label{lem:main-lb}
For any $(\eps,\delta)$-pan private algorithm $M$,
$$
    \dtv(M(\bP_V^n), M(\bU^n)) \leq n \cdot \sqrt{\tfrac{1}{2}  I_{\eps,\delta}(\{\bP_{v}\})} 
$$
where we define
$
    I_{\eps,\delta}(\{\bP_{v}\}) = \sup_{M \from \cX \to \cR \atop \textrm{$(\eps,\delta)$-DP}} I( M(\bP_V); V)
$
\end{lem}

\begin{proof}[Proof of Lemma~\ref{lem:main-lb}]
As a shorthand, let $\bQ_{i}$ denote the distribution of $M(\bU^{i}, \bP_{V}^{n-i})$. This is the distribution of the algorithm's output on a data stream where the first $i$ elements are sampled i.i.d. from $\bU$ and the rest from $\bP_V$. Note that $\bQ_{0} = M(\bP_{V}^n)$ and $\bQ_{n} = M(\bU^n)$.  By the triangle inequality we have
$$
    \dtv(M(\bP_V^n), M(\bU^n)) = \dtv(\bQ_0, \bQ_n) \leq \sum_{i=1}^{n} \dtv(\bQ_{i-1},\bQ_{i}).
$$
Thus, in order to prove the theorem it is enough to show that for every $i = 1,\dots,n$,
\begin{equation} \label{eq:main-lb-0}
\dtv(\bQ_{i-1},\bQ_{i}) \leq \sqrt{\tfrac{1}{2}  I_{\eps,\delta}(\{\bP_{v}\})}
\end{equation}

\newcommand{\Xn}{X_{i+1 \cdots n}}

Before proving \eqref{eq:main-lb-0}, we give a simplified diagram of the relevant random variables in the two distributions $\bQ_{i-1},\bQ_{i}$ in Figure~\ref{fig:rvs}.  For the purposes of comparing $\bQ_{i-1}$ and $\bQ_{i}$, we can group all of the inputs $X_1,\dots,X_{i-1} \sim \bU^{i-1}$ into one random variable and all of the inputs $\Xn \sim \bP_V^{n-i}$ into another random variable.  Moreover, in $\bQ_{i-1}$, $X_i$ is drawn from $\bP_V$, for the same choice of $V$ as $\Xn$, whereas in $\bQ_{i}$, $X_i$ is drawn from $\bU$.  

\begin{figure}[h!]
	\centering
\begin{tikzpicture}[
node/.style={rectangle, draw=black!100, fill=blue!5, minimum size=7mm},
]
\node[node]		(Xi)														{$X_i$};
\node[node]		(X1)		[left = 5mm of Xi] 		{$X_{1 \cdots i-1}$};
\node[node]		(Xn)        [right = 5mm of Xi]  	{$\Xn$};
\node[node]    	(V)          [below = 5mm of Xi]              {$V$};

\node[node]		(Si)       [above = 5mm of Xi]  {$S_i$};
\node[node]		(Sn)	[right = 5mm of Si]	{$S_n$};

\draw[->] (V.north) -- (Xi);
\draw[->] (V.east) -- (Xn.south);
\draw[->] (Si) -- (Sn);
\draw[->] (Xn.north) -- (Sn.south);
\draw[->] (Xi) -- (Si);
\draw[->] (X1.north) -- (Si.west);
\end{tikzpicture} \qquad\qquad\qquad\qquad
\begin{tikzpicture}[
node/.style={rectangle, draw=black!100, fill=blue!5, minimum size=7mm},
]
\node[node]		(Xi)														{$X_i$};
\node[node]		(X1)		[left = 5mm of Xi] 		{$X_{1 \cdots i-1}$};
\node[node]		(Xn)        [right = 5mm of Xi]  	{$\Xn$};
\node[node]    	(V)          [below = 5mm of Xi]              {$V$};

\node[node]		(Si)       [above = 5mm of Xi]  {$S_i$};
\node[node]		(Sn)	[right = 5mm of Si]	{$S_n$};

\draw[->] (V) -- (Xn.south);
\draw[->] (Si) -- (Sn);
\draw[->] (Xn.north) -- (Sn.south);
\draw[->] (Xi) -- (Si);
\draw[->] (X1.north) -- (Si.west);
\end{tikzpicture}
\caption{A simplified diagram of the relevant random variables in $\bQ_{i-1}$ (left) and $\bQ_{i}$ (right).}
\label{fig:rvs}
\end{figure}
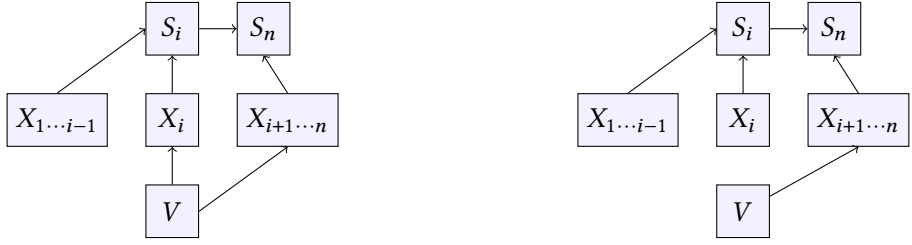

Now, observe that the random variables $S_i$ and $\Xn$ have the same \emph{marginal} distribution in both $\bQ_{i-1},\bQ_{i}$.  However, in $\bQ_{i-1}$ they are correlated by the shared choice of $V$, and in $\bQ_{i}$ they are independent.  Moreover, $S_{n}$ is a post-processing of the pair $(S_i,\Xn)$.  Thus, using $(S_i, \Xn)$ to denote the joint distribution of $S_i(V)$ and $\Xn(V)$ in $\bQ_{i-1}$, and applying the data-processing inequality, we have
\begin{align}
	\dtv(\bQ_{i-1}, \bQ_{i}) 
	\leq{} &\dtv( (S_i,\Xn), (S_i \otimes \Xn) ) \notag \\
	\leq{} &\ex{s_i \sim S_i}{\dtv( \Xn \cond{S_i = s_i}, \Xn)} \tag{Fact~\ref{fact:tvchainrule}}
\end{align}
where the last inequality uses the following fact.
\begin{fact} \label{fact:tvchainrule}
	If $(A,B)$ and $(A,B')$ are joint distributions, $\dtv((A,B),(A,B')) \leq \ex{a \sim A}{\dtv(B \cond{A = a}, B' \cond{A=a})}$.
\end{fact}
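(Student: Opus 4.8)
\noindent To prove Fact~\ref{fact:tvchainrule}, the plan is to unfold the definition of total variation distance at the level of probability mass functions and pull the shared marginal of $A$ out of the absolute value. Taking $A,B,B'$ discrete for exposition, I would first write $\dtv((A,B),(A,B')) = \tfrac12\sum_{a,b}\bigl|\pr{}{A=a,B=b} - \pr{}{A=a,B'=b}\bigr|$. Since both joint laws share the marginal law of $A$, I would factor $\pr{}{A=a,B=b} = \pr{}{A=a}\pr{}{B=b\cond{A=a}}$ and likewise for $B'$, pull the nonnegative factor $\pr{}{A=a}$ outside the absolute value, and swap the order of summation, arriving at $\sum_a\pr{}{A=a}\cdot\tfrac12\sum_b\bigl|\pr{}{B=b\cond{A=a}}-\pr{}{B'=b\cond{A=a}}\bigr| = \ex{a\sim A}{\dtv\bigl(B\cond{A=a},B'\cond{A=a}\bigr)}$. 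In fact this chain of manipulations gives an equality; the stated inequality is all that is used downstream.

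For the general case, where the underlying spaces need not be discrete, the same argument goes through after replacing mass functions by densities with respect to a common dominating measure and replacing conditionals by regular conditional distributions along the shared first coordinate $A$, with the interchange of sums becoming an application of Tonelli's theorem. I do not expect a genuine obstacle here: the only points needing a sentence of care are that the two joint laws really do share the $A$-marginal (which is exactly the hypothesis) and that the conditional laws $B\cond{A=a}$ and $B'\cond{A=a}$ exist for $A$-almost every $a$ (which holds on the standard Borel spaces in play).

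\noindent It is worth recording how this Fact slots into the remainder of the proof of Lemma~\ref{lem:main-lb}, since that is where the real work lies. After applying the Fact with $A=S_i$ and $B=B'=\Xn$, the task left is to bound $\ex{s_i\sim S_i}{\dtv\bigl(\Xn\cond{S_i=s_i},\Xn\bigr)}$, and here my plan is: apply Pinsker's inequality inside the expectation, then Jensen's inequality to pull the square root out, using the identity $\ex{s_i}{\mathrm{KL}\bigl(\Xn\cond{S_i=s_i}\,\|\,\Xn\bigr)}=I(S_i;\Xn)$; then invoke the Markov chain $S_i - V - \Xn$ (valid because, conditioned on $V$, the state $S_i$ is a function of $X_i$ and of randomness --- the coordinates $X_1,\dots,X_{i-1}\sim\bU^{i-1}$ and the internal coins --- that is independent of $\Xn$) together with the data-processing inequality to get $I(S_i;\Xn)\le I(S_i;V)$; and finally observe that $S_i$ is obtained from $X_i$ by a randomized map $R\from\cX\to\In$ that is $(\eps,\delta)$-differentially private, so that $I(S_i;V)=I(R(\bP_V);V)\le I_{\eps,\delta}(\{\bP_v\})$. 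Stringing these together yields $\dtv(\bQ_{i-1},\bQ_i)\le\sqrt{\tfrac12 I_{\eps,\delta}(\{\bP_v\})}$, i.e.\ \eqref{eq:main-lb-0}, which closes the lemma. The step I would treat most carefully is the claim that $R$ is $(\eps,\delta)$-DP: pan-privacy of $M$ controls the \emph{joint} law of the internal state at one time and the final output, so I would project onto the internal-state coordinate at time $t=i$, note that altering $x_i$ produces neighboring streams $\vec{x}_{\le i}$, and then use that an average over $X_1,\dots,X_{i-1}$ of $(\eps,\delta)$-DP mechanisms remains $(\eps,\delta)$-DP.
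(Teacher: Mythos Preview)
Your proof of Fact~\ref{fact:tvchainrule} is correct, and as you observe it actually yields equality. The paper takes the alternative route via the sup-over-events characterization of total variation: it fixes a test set $T$, slices it as $T\cond{A=a} = \{b : (a,b)\in T\}$, writes the difference of joint probabilities as $\ex{a\sim A}{\pr{}{B\cond{A=a}\in T\cond{A=a}} - \pr{}{B'\cond{A=a}\in T\cond{A=a}}}$, bounds the integrand pointwise by $\dtv(B\cond{A=a},B'\cond{A=a})$, and then takes the sup over $T$. Your $L^1$ decomposition is more direct and reveals the equality; the paper's event-based argument never needs to speak of densities, dominating measures, or Tonelli, so it works in general spaces without the extra sentence of care you flag.

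On the downstream application you sketch, your ordering differs from the paper's but is equally valid. You apply Pinsker and Jensen first to reach $\sqrt{\tfrac12 I(S_i; X_{i+1\cdots n})}$ and then invoke the data-processing inequality for mutual information along the chain $S_i\text{--}V\text{--}X_{i+1\cdots n}$. The paper instead first applies Fact~\ref{fact:markovchain} (a TV-level Markov-chain bound) to replace $X_{i+1\cdots n}$ by $V$ \emph{before} invoking Jensen and Pinsker, arriving at $\sqrt{\tfrac12 I(S_i;V)}$ directly. Both routes land in the same place, and your justification that the map $x\mapsto S_i$ is $(\eps,\delta)$-DP (project pan-privacy at time $t=i$ onto the internal-state coordinate, then average over $X_1,\dots,X_{i-1}\sim\bU^{i-1}$) matches the paper's.
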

Next, since $S_i$ and $\Xn$ are independent conditioned on $V$, we have
\begin{align}
	\ex{s_i \sim S_i}{\dtv( \Xn \cond{S_i = s_i}, \Xn)}
	\leq{} \ex{s_i \sim S_i}{\dtv( V \cond{S_i = s_i}, V)} \tag{Fact~\ref{fact:markovchain}}
\end{align}
where we use the following fact.
\begin{fact} \label{fact:markovchain}
	If $(A,B,C)$ are jointly distributed random variables and $A$ and $B$ are independent conditioned on $C$, then for every $a \in \mathrm{supp}(A)$, $\dtv(B \cond{A = a}, B) \leq \dtv(C \cond{A=a}, C)$.
\end{fact}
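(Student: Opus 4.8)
The plan is to obtain the inequality from the data-processing inequality for total variation distance, by exhibiting $B \cond{A = a}$ and $B$ as images of $C \cond{A = a}$ and $C$ under one and the same channel.

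First I would reformulate the hypothesis. Let $\mathbf{K}_{c}$ denote the conditional distribution of $B$ given $C = c$. Because $A$ and $B$ are independent conditioned on $C$, for every $c$ that has positive probability under the law of $C$ given $A = a$ the distribution $\mathbf{K}_{c}$ is also the conditional distribution of $B$ given $\{C = c, A = a\}$. Thus the family $\{\mathbf{K}_{c}\}_{c}$ is a Markov kernel from the range of $C$ to the range of $B$ that governs $B$ both unconditionally and conditionally on $A = a$.

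Next I would express both distributions in the statement as pushforwards of this kernel. By the law of total probability together with the identity above, the law of $B$ given $A = a$ is obtained by drawing $c$ from the law of $C$ given $A = a$ and then sampling from $\mathbf{K}_{c}$; and the unconditional law of $B$ is obtained by drawing $c$ from the unconditional law of $C$ and then sampling from $\mathbf{K}_{c}$. In short, $B \cond{A = a}$ and $B$ are the images of $C \cond{A = a}$ and $C$ under the common channel $\{\mathbf{K}_{c}\}_{c}$.

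Finally I would invoke the data-processing inequality for $\dtv$: post-processing two distributions by a fixed randomized map cannot increase their total variation distance. Applied to the channel $\{\mathbf{K}_{c}\}_{c}$ this gives $\dtv(B \cond{A = a},\, B) \leq \dtv(C \cond{A = a},\, C)$, which is the claim. I do not anticipate any real obstacle here; the only mild care needed is the measure-theoretic justification of the conditional-independence identity and the interchange of integration in a general setting, which is routine and trivial in the discrete case relevant here. If the data-processing inequality for $\dtv$ has not already been recorded in the paper, I would add its one-line proof via the coupling characterization of total variation distance.
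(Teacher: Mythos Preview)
Your proposal is correct and is essentially the same argument as the paper's: both use the conditional-independence hypothesis to identify a single kernel $c \mapsto (B \mid C=c)$ that maps the law of $C$ (resp.\ $C \mid A=a$) to the law of $B$ (resp.\ $B \mid A=a$), and then conclude by data processing for total variation. The only difference is cosmetic: the paper unfolds the data-processing step explicitly by fixing a test set $T$ and bounding the difference via the function $f(c)=\pr{}{B\in T\mid C=c}\in[0,1]$, whereas you invoke the inequality by name.
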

We prove Facts \ref{fact:tvchainrule} and \ref{fact:markovchain} in Appendix \ref{sec:supporting-proofs}. From this point we can calculate
\begin{align}
\ex{s_i \sim S_i}{\dtv( V \cond{S_i = s_i}, V)}
\leq{} &\sqrt{\ex{s_i \sim S_i}{\dtv( V \cond{S_i = s_i}, V)^2}} \tag{Jensen's Inequality} \\
\leq{} &\sqrt{\ex{s_i \sim S_i}{\tfrac{1}{2} \cdot \dkl( V \cond{S_i = s_i} \| V)}} \tag{Pinsker's Inequality} \\
={}	 &\sqrt{\ex{s_i \sim S_i}{\tfrac{1}{2} \cdot \dkl( (S_i,V) \| (S_i \otimes V))}} \tag{chain rule for KL-divergence} \\
\leq{} &\sqrt{\tfrac{1}{2} \cdot I(S_i ; V)} \tag{definition of mutual information}
\end{align}

Lastly, we argue that $I(S_i; V) \leq I_{\eps,\delta}(\{\bP_v\})$ using pan-privacy.  The intuition is that pan privacy requires $S_{i}$ to be $(\eps,\delta)$-differentially private as a function of the prefix $X_1,\dots,X_i$.  Moreover, $X_1,\dots,X_{i-1}$ are drawn from the fixed distribution $\bU^{i-1}$ that is independent from $V$.  Therefore, we can fix the distribution of $X_1,\dots,X_{i-1}$ and view $S_{i}$ as an $(\eps,\delta)$-differentially private function of just $X_{i}$.  Specifically, given an $(\eps,\delta)$-pan private algortihm $M$, and $i$, define the function $f_i \from \cX \to \cR$ as follows: $f_i(x)$ samples $X_1,\dots,X_{i-1} \sim \bU^{i-1}$, computes $s_1 = M_1(X_1)$, $s_2 = M_2(X_2,s_1)$, \dots, $s_{i-1} = M_{i-1}(X_{i-1},s_{i-2})$, and outputs $r = M_{i}(x, s_{i-1})$.  Pan-privacy guarantees that $f_{i}(x) = M_{i}(X_1,\dots,X_{i-1},x)$ is $(\eps,\delta)$-differentially private as a function of $x$.  Note that $S_i \cond{X_i = x}$ is distributed identically as $f_{i}(x)$.  Therefore
$$
\sqrt{\tfrac{1}{2} I(S_i ; V)} = \sqrt{\tfrac{1}{2} I(M_i(\bP_V) ; V)} \leq \sqrt{\tfrac{1}{2} I_{\eps,\delta}(\{\bP_v\})}
$$
Combining with the previous calculations gives
$$
\dtv(\bQ_{i-1},\bQ_{i}) \leq \sqrt{\tfrac{1}{2} I_{\eps,\delta}(\{\bP_v\})},
$$
as desired.
\end{proof}

To use Lemma~\ref{lem:main-lb} we need a bound on the mutual information $I_{\eps,\delta}(\{\bP_v\})$.  A result of Duchi, Jordan, and Wainwright~\cite{DuchiJW13}, gives such a bound for the case of $\delta = 0$.
\begin{lem} [\cite{DuchiJW13}] \label{lem:inf-to-2-bound-pure}
	$
	I_{\eps,0}(\{\bP_{v}\}) \leq O(\eps^2 \|\{\bP_{v}\}\|_{\infty \to 2}^2).
	$
\end{lem}
We give a simple extension to the case of $\delta > 0$.
\begin{lem} \label{lem:inf-to-2-bound}
	$
	I_{\eps,\delta}(\{\bP_{v}\}) \leq O(\eps^2 \|\{\bP_{v}\}\|_{\infty \to 2}^2 + \delta \log \nicefrac{|\cV|}{\delta} ).
	$
\end{lem}
Therefore, we will obtain Theorem~\ref{thm:main-lb} as an immediate corollary of Lemma~\ref{lem:main-lb} and Lemma~\ref{lem:inf-to-2-bound}.  The proof of Lemma~\ref{lem:inf-to-2-bound} from Lemma~\ref{lem:inf-to-2-bound-pure} relies on the following statement, which is an easy consequence of a structural result of Kairouz, Oh, and Viswanath~\cite{KairouzOV15}.
\begin{lem} \label{lem:approx-to-pure}
If $M \from \cX \to \cR$ is $(\eps,\delta)$-differentially private, then there is a $(2\eps,0)$-differentially private $M'$ such that
$$
\forall x \in \cX~~\dtv(M(x),M'(x)) \leq \delta
$$
\end{lem}
For completeness, we prove this lemma in Appendix~\ref{sec:supporting-proofs}.

\begin{proof}[Proof of Lemma~\ref{lem:inf-to-2-bound}]
Let $M$ be any $(\eps,\delta)$-differentially private function with input $x\in \cX$.  Lemma~\ref{lem:approx-to-pure} guarantees that there exists a mechanism $M'$ that is $(2\eps,0)$-differentially private and satisfies

$$
\forall x \in \cX~~\dtv(M(x),M'(x)) \leq \delta
$$
In particular, $\dtv(M(\bP_V), M'(\bP_V)) \leq \delta$.  Therefore, there exists a joint distribution $(M,M')$ such that $M = M(\bP_V)$, $M' = M'(\bP_V)$ and $\pr{}{M \neq M'} \leq \delta$.  Let $B$ be the binary random variable $\ind\{M \neq M'\}$.  Thus, there is a joint distribution $(M,M',B)$ such that $(B = 0 \Longrightarrow R = R')$ and $\pr{}{B \neq 0} \leq \delta$.  Therefore,
\begin{align*}
I(V ; R)
\leq{} &I(V ; M,M',B) \\
\leq{}& I(V ; M,M' \mid B) + H(B) \\
={} &I(V ; M,M' \mid B = 0) \pr{}{B = 0} + I(V; M,M' \mid B=1) \pr{}{B=1} + H(B) \\
\leq{} &I(V; M') + H(V) \delta + H(B) \\
={} &I(V;M') +  O(\delta \log|\cV| + \delta \log(1/\delta)) \\
\leq{} &I_{2\eps,0}(\{\bP_{v}\}) + O(\delta \log|\cV| + \delta \log(1/\delta)) \\
={} &O(\eps^2 \|\{\bP_{v}\}\|_{\infty \to 2}^2) + O(\delta \log|\cV| + \delta \log(1/\delta))
\end{align*}
The lemma now follows by rewriting the final expression as $O(\delta \log\nicefrac{|\cV|}{\delta})$.
\end{proof}

\subsection{A Family of Hard Distributions} \label{sec:hard-distributions}

In order to apply Theorem~\ref{thm:main-lb} to a learning or optimization problem, we need a family of distributions $\{\bP_{v}\}$ such that $\| \{\bP_{v}\} \|_{\infty \to 2}$ is small and any accurate algorithm for the problem distinguishes $\bP_V^n$ from $\bU^n$. This subsection describes one such family we will use in most of our lower bound arguments.

Let $\cX = \pmo^d$ be the data domain.  For a parameter $\alpha \in (0,\nicehalf)$, a non-empty set $\ell \subseteq [d]$, and a bit $b \in \pmo^{d}$, we define the distribution $\bP_{d,\ell,b,\alpha}$ to be uniform on $\pmo^{d}$ except biased so that $\mathbb{E}_{x \sim \bP_{d,\alpha, \ell,b,\alpha}}(\prod_{i \in t} x_i) = 2\alpha b$. Its probability mass function is
\begin{equation} \label{eq:Ptb}
    \bP_{d,\ell,b,\alpha}(x) =
    \begin{cases}
        (1+2\alpha) 2^{-d} & \textrm{if $\prod_{i \in t} x_i = b$} \\
        (1-2\alpha) 2^{-d} & \textrm{if $\prod_{i \in t} x_i = -b$}
    \end{cases}
\end{equation}
Note that, by construction, for every non-empty $t' \neq t$, $\mathbb{E}_{x \sim \bP_{d,\ell,b,\alpha}}(\prod_{i \in t'} x_i) = 0$.

For dimension $d$, a parameter $k \leq d$, and $\alpha \in (0,\nicehalf)$, we define the family
\begin{equation} \label{eq:Pdk}
\cP_{d,k,\alpha} = \{ \bP_{d,\ell,b,\alpha} : t \subseteq [d], |t| \in [k], b \in \pmo \}
\end{equation}

\begin{fact}
\label{fact:Pdk-size}
The size of the family $\cP_{d,k,\alpha}$ is $2\cdot \binom{d}{\leq k}$ where $\binom{d}{\leq k} = \sum_{j=1}^k \binom{d}{j}$.
\end{fact}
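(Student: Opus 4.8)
The plan is to show that the parametrization $(t,b) \mapsto \bP_{d,\ell,b,\alpha}$ is a bijection from the index set $\{(t,b) : t \subseteq [d],\ |t| \in [k],\ b \in \pmo\}$ onto $\cP_{d,k,\alpha}$, and then count the index set directly.

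First I would count the index set. For each $j \in [k]$ there are exactly $\binom{d}{j}$ subsets $t \subseteq [d]$ with $|t| = j$, so there are $\sum_{j=1}^{k} \binom{d}{j} = \binom{d}{\leq k}$ admissible choices of $t$; each such $t$ is paired with one of the two values $b \in \pmo$, giving $2 \binom{d}{\leq k}$ pairs in total. Surjectivity of the parametrization onto $\cP_{d,k,\alpha}$ is immediate from the definition \eqref{eq:Pdk}.

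Next I would verify injectivity, which is the only step requiring an argument. Given a distribution $\bP = \bP_{d,\ell,b,\alpha} \in \cP_{d,k,\alpha}$, consider the moments $\mathbb{E}_{x \sim \bP}\big[\prod_{i \in t'} x_i\big]$ ranging over non-empty $t' \subseteq [d]$; by construction and the remark following \eqref{eq:Ptb}, this moment equals $2\alpha b$ when $t' = t$ and $0$ otherwise. Since $\alpha \in (0,\nicehalf)$ is fixed and nonzero, $t$ is recovered as the unique non-empty set whose corresponding moment is nonzero, and $b$ is then recovered as the sign of that moment. Hence distinct pairs $(t,b)$ yield distinct distributions, so the parametrization is injective. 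Combining this with the count of the index set gives $|\cP_{d,k,\alpha}| = 2\binom{d}{\leq k}$.

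There is no real obstacle here; the only subtlety is confirming that no two parameter settings collapse to the same distribution, which the moment-extraction argument handles.
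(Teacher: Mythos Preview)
Your argument is correct. The paper itself offers no proof of this fact at all---it is simply asserted---so your proposal is, if anything, more careful than the source: you explicitly verify via the parity moments that distinct parameter pairs $(t,b)$ give distinct distributions, whereas the paper leaves this implicit.
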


\begin{fact}
\label{fact:Pdk-mixture}
The uniform mixture over the family $\cP_{d,k,\alpha}$ is uniform over $\cX$.
\end{fact}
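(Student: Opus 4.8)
The plan is to exploit the symmetry of the family under negating the sign bit $b$. First I would rewrite the mass function \eqref{eq:Ptb} in the single closed form
\[
\bP_{d,\ell,b,\alpha}(x) = \Bigl(1 + 2\alpha\, b \cdot \textstyle\prod_{i \in \ell} x_i\Bigr)\, 2^{-d},
\]
which is valid because $\prod_{i\in\ell} x_i = b$ forces $b\prod_{i\in\ell}x_i = 1$ (giving mass $(1+2\alpha)2^{-d}$) and $\prod_{i\in\ell}x_i = -b$ forces $b\prod_{i\in\ell}x_i = -1$ (giving mass $(1-2\alpha)2^{-d}$), matching \eqref{eq:Ptb}.

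Next I would observe that sampling uniformly from the family $\cP_{d,k,\alpha}$ of \eqref{eq:Pdk} is the same as first drawing a set $\ell$ uniformly among the $\binom{d}{\leq k}$ nonempty subsets of $[d]$ of size at most $k$, and then, independently, drawing $b$ uniformly from $\pmo$. Hence the uniform mixture $\bU$ has mass
\[
\bU(x) = \frac{1}{\binom{d}{\leq k}} \sum_{\ell} \frac{1}{2}\sum_{b \in \pmo} \bP_{d,\ell,b,\alpha}(x)
\]
for every $x \in \pmo^d$. For each fixed $\ell$ the inner average over $b$ is
\[
\frac{1}{2}\Bigl[\bigl(1 + 2\alpha\textstyle\prod_{i\in\ell}x_i\bigr)2^{-d} + \bigl(1 - 2\alpha\textstyle\prod_{i\in\ell}x_i\bigr)2^{-d}\Bigr] = 2^{-d},
\]
so the $\alpha$-dependent term cancels and each summand over $\ell$ equals $2^{-d}$; summing over the $\binom{d}{\leq k}$ choices of $\ell$ and dividing yields $\bU(x) = 2^{-d}$, i.e.\ $\bU$ is uniform on $\cX$.

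There is no real obstacle here: the only substantive point is the cancellation above, and it is precisely the reason the family is defined with the pairing of each set $\ell$ with both sign bits $b \in \pmo$ --- it forces all first (and higher) moments of the mixture to vanish while keeping every individual component $\bP_{d,\ell,b,\alpha}$ far from uniform, which is exactly what the subsequent bound on $\|\cP_{d,k,\alpha}\|_{\infty\to 2}$ and the application of Theorem~\ref{thm:main-lb} will need.
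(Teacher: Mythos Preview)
Your proof is correct; the paper itself does not supply a proof of this fact, treating it as immediate from the definition. Your computation---writing $\bP_{d,\ell,b,\alpha}(x) = \bigl(1 + 2\alpha b \prod_{i\in\ell} x_i\bigr)2^{-d}$ and averaging over $b\in\pmo$ so that the bias term cancels---is exactly the natural verification, and indeed the paper uses this same closed form for the mass function in the proof of Lemma~\ref{lem:P-sample-bound}.
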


The following lemma is implicit in many lower bounds for local differential privacy (e.g.~\cite{DuchiJW13,Ullman18,EdmondsNU20}), although we reprove it here for completeness.
\begin{lem} \label{lem:P-sample-bound}
    For every $d \in \N$, $k \leq d$, and $\alpha \in (0,\nicehalf)$,
    $$
        \| \cP_{d,k,\alpha} \|_{\infty \to 2}^{2} \leq \frac{4\alpha^2}{\binom{d}{\leq k}}
    $$
\end{lem}
\begin{proof}
We begin by expanding the definition of the $(\infty \to 2)$ norm:
\begin{align*}
\| \cP_{d,k,\alpha} \|_{\infty \to 2}^{2} &= \sup_{f \from \cX \to [\pm 1]} \sum_{\bP \in \cP_{d,k,\alpha}} \frac{1}{|\cP_{d,k,\alpha}|} \cdot \left( \ex{x \sim \bP}{f(x)} - \ex{x \sim \bU}{f(x)}  \right)^2 \\
    &= \sup_{f \from \cX \to [\pm 1]} \sum_{t\subseteq [d], |t|\in [k] \atop b\in \{\pm 1\}} \frac{1}{|\cP_{d,k,\alpha}|} \cdot \left( \sum_{x \in \pmo^d } f(x)\cdot ( \bP_{d,\ell,b,\alpha}(x) - \bU(x)) \right)^2 \\
    &= \sup_{f \from \cX \to [\pm 1]} \frac{1}{2\binom{d}{\leq k}} \cdot \sum_{t\subseteq [d], |t|\in [k] \atop b\in \{\pm 1\}} \left( \sum_{x \in \pmo^d } f(x)\cdot ( \bP_{d,\ell,b,\alpha}(x) - \bU(x)) \right)^2 \stepcounter{equation} \tag{\theequation} \label{eq:P-sample-bound-1}
\end{align*}
The final equality comes from Fact \ref{fact:Pdk-size}.
Note that \eqref{eq:Ptb} is equivalent to $\bP_{d,\ell,b,\alpha}(x) = (1 + 2\alpha b \cdot \prod_{i \in t} x_i)2^{-d}$ and, via Fact \ref{fact:Pdk-mixture}, $\bU(x)=2^{-d}$. Thus,
\begin{align*}
\eqref{eq:P-sample-bound-1} &= \sup_{f \from \cX \to [\pm 1]} \frac{1}{2\binom{d}{\leq k}} \cdot  \sum_{t\subseteq [d], |t|\in [k] \atop b\in \{\pm 1\}} \left( \sum_{x \in \{\pm 1\}^d } f(x)\cdot 2\alpha b \cdot \prod_{i \in t} x_i \cdot 2^{-d} \right)^2  \\
    &= \sup_{f \from \cX \to [\pm 1]} \frac{2\alpha^2}{\binom{d}{\leq k}} \cdot \sum_{t\subseteq [d], |t|\in [k] \atop b\in \{\pm 1\}} \left( \sum_{x \in \{\pm 1\}^d } f(x) \cdot \prod_{i \in t} x_i \cdot 2^{-d} \right)^2 \\
    &= \sup_{f \from \cX \to [\pm 1]} \frac{4\alpha^2}{\binom{d}{\leq k}} \cdot \sum_{t\subseteq [d], |t|\in [k]} \left( \sum_{x \in \{\pm 1\}^d } f(x) \cdot \prod_{i \in t} x_i \cdot 2^{-d} \right)^2 \\
    &\leq \sup_{f \from \cX \to [\pm 1]} \frac{4\alpha^2}{\binom{d}{\leq k}} \cdot \sum_{t\subseteq [d]} \left( \sum_{x \in \{\pm 1\}^d } f(x) \cdot \prod_{i \in t} x_i \cdot 2^{-d} \right)^2 \stepcounter{equation} \tag{\theequation} \label{eq:P-sample-bound-2}
\end{align*}
Define $\hat{f}(t) := \ex{X\sim \bU}{ f(X)\cdot \prod_{i\in t} X_i}$, the Fourier transform over the Boolean hypercube. This is precisely the term being squared above. So we have
\begin{align*}
\eqref{eq:P-sample-bound-2} &= \frac{4\alpha^2}{\binom{d}{\leq k}} \cdot \sup_{f \from \cX \to [\pm 1]} \sum_{t\subseteq [d]} \hat{f}(t)^2 \\
    &= \frac{4\alpha^2}{\binom{d}{\leq k}} \cdot \sup_{f \from \cX \to [\pm 1]} \ex{X\sim \bU}{f(X)^2} \tag{Parseval's identity} \\
    &\leq \frac{4\alpha^2}{\binom{d}{\leq k}}
\end{align*}
This concludes the proof.
\end{proof}

The following is an immediate corollary of Theorem \ref{thm:main-lb}, Lemma \ref{lem:P-sample-bound}, and Fact \ref{fact:Pdk-size}.

\begin{thm}
\label{thm:P-lb}
Let $\bP_{d,L,B,2\alpha}$ denote a distribution chosen uniformly at random from $\cP_{d,k,\alpha}$ (where $L$ is a uniformly random subset of $[d]$ with size $\leq k$ and $B$ is a uniformly random member of $\pmo$). If $M$ is an $(\eps,\delta)$-pan private algorithm such that  $\delta \log\nicefrac {\binom{d}{\leq k}}{\delta} \ll \alpha^2 \eps^2  / \binom{d}{\leq k} $ and $\dtv(M(\bP_{d,L,B,\alpha}^n), M(\bU^n))$ is larger than a positive constant, 
	then
	$$
	n \geq \Omega\left(\frac{ \sqrt{\binom{d}{\leq k}} }{ \alpha \eps }\right)
	$$
\end{thm}

\section{Lower Bounds for Simple Hypothesis Testing}

In this section, we use Theorem \ref{thm:P-lb} obtain lower bounds for the problem of simple hypothesis testing. We first prove a lower bound that holds under pan-privacy, then adapt it for robust shuffle privacy via Theorem \ref{thm:shuffle-to-pan}. This pattern is repeated in the subsequent lower bound sections.

\begin{defn}[$d$-Wise Simple Hypothesis Testing]
Let $d$ be any integer larger than 1 and let $\alpha$ be any real in the interval $(0,\nicehalf)$. An algorithm $M$ solves \emph{$d$-wise simple hypothesis testing with error $\alpha$ and sample complexity $n$} if, for any set of $d$ distributions $\cP$ satisfying  $\dtv(\bP,\bP')\geq \alpha$ for every distinct pair $\bP,\bP' \in \cP$, when given $n$ independent samples from an arbitrary $\bP \in \cP$ as input, the algorithm outputs $\bP$ with probability $\geq 99/100$. This probability is over the randomness of the samples and of $M$.
\end{defn}




\begin{thm}
\label{thm:hyp-selection-pan}
If $M$ is an $(\eps,\delta)$-pan-private algorithm that solves $d$-wise simple hypothesis testing with error $\alpha$ and $\delta \log \nicefrac{d}{\delta} \ll \alpha^2 \eps^2 / d$, then its sample complexity is $n=\Omega(\sqrt{d}/\alpha\eps)$.
\end{thm}
\begin{proof}
Consider the set of distributions $\{\bU\} \cup \cP_{d,1,\alpha}$.  Note that this is a family of $2d+1$ distributions.  From Fact \ref{fact:Pdk-size}, its size is $2d+1$. We also prove the following in the Appendix:
\begin{clm}
\label{clm:P-separated}
For any $\bP\neq\bP' \in \{\bU\} \cup \cP_{d,1,\alpha}$, $\dtv(\bP,\bP') \geq \alpha$.
\end{clm}
The upshot is that $\{\bU\} \cup \cP_{d,1,\alpha}$ is a valid set of distributions for $(2d+1)$-wise hypothesis testing. We now argue that the accuracy of $M$ for this problem instance implies that we can invoke Theorem \ref{thm:P-lb}.

To do so, let $\bP_{d,L,B,\alpha}$ denote a distribution chosen uniformly at random from $\cP_{d,1,\alpha}$. We show that the total variation distance between $M(\bU^n)$ and $M(\bP^n_{d,L,B,\alpha})$ is at least some positive constant.
\begin{align*}
&\dtv(M(\bU^n),M(\bP^n_{d,L,B,\alpha}))\\
    ={}& \max_{\cP \subseteq \{\bU \}\cup \cP_{d,k,\alpha}} \left|\pr{}{M(\bU^n) \in \cP} - \pr{}{M(\bP^n_{d,L,B,\alpha}) \in \cP} \right| \\
    \geq{}& \pr{}{M(\bU^n) \in \{ \bU \} } - \pr{}{M(\bP^n_{d,L,B,\alpha}) \in \{ \bU \} } \\
    \geq{}& \pr{}{M(\bU^n) \in \{ \bU \} } - \frac{1}{100} \\
    \geq{}& \frac{99}{100} - \frac{1}{100} = \frac{49}{50}
\end{align*}
To obtain the second inequality, we first observe that $\bP_{d,t,b,\alpha} \neq \bU$ for every $t,b$ so $\bU$ would be an incorrect output. Then we use the fact that $M$ solves simple hypothesis testing: it is incorrect with probability at most $1/100$. The same reasoning yields the third inequality.

From Theorem \ref{thm:P-lb}, we conclude that $n = \Omega\left(\frac{1}{\eps \| \cP_{d,1,\alpha} \|_{\infty \to 2}}\right) = \Omega \left( \sqrt{d}/\alpha \eps \right).$.  This lower bound holds for a family of $2d+1$ distributions, so the claimed result follows by rescaling $d$.
\end{proof}

The next theorem adapts our proof to the robust shuffle privacy setting:

\begin{thm} \label{thm:hyp-selection-shuffle}
If $\Pi$ is an $(\eps,\delta,1/3)$-robustly shuffle private protocol that solves $d$-wise simple hypothesis testing with error $\alpha$ and $\delta\log \nicefrac{d}{\delta} \ll \alpha^2\eps^2 / d$, then its sample complexity is $n=\Omega(\sqrt{d}/\alpha\eps)$.
\end{thm}
\begin{proof}
As before, let $\bP_{d,L,B,\alpha}$ denote a distribution chosen uniformly at random from $\cP_{d,1,\alpha}$. Let $\Pi$ denote an algorithm in the shuffle model that solves $(2d+1)$-wise simple hypothesis testing with accuracy $2\alpha/9$.

Let $M^\Pi$ denote the $(\eps,\delta)$-pan-private algorithm guaranteed by Theorem \ref{thm:shuffle-to-pan}. We will lower bound the total variation distance between $M^\Pi(\bU^{n/3})$ and $M^\Pi(\bP^{n/3}_{d,L,B,\alpha})$.

\begin{align*}
&\dtv(M^\Pi(\bU^{n/3}), M^\Pi(\bP^{n/3}_{d,L,B,\alpha})) \\
\geq{}& \pr{}{M^\Pi(\bU^{n/3}) \in \{ \bU \} } - \pr{}{M^\Pi(\bP^{n/3}_{d,L,B,\alpha}) \in \{ \bU \} } \\
\geq{}& \pr{}{\Pi(\bU^n) \in \{ \bU \}} - \pr{}{\Pi(\bP^n_{d,L,B,2\alpha/9}) \in \{ \bU \} } - \frac{1}{6} \tag{Theorem \ref{thm:shuffle-to-pan}}\\
\geq{}& \frac{49}{50} - \frac{1}{6} 
= \frac{61}{75}
\end{align*}
The third inequality comes from repeating the analysis in the proof of Theorem \ref{thm:hyp-selection-pan}. Since $M^\Pi$ is an $(\eps,\delta)$-pan-private algorithm such that $$\dtv(M^\Pi(\bU^{n/3}),M^\Pi(\bP^{n/3}_{d,L,B,\alpha}))$$ is at least a positive constant, we invoke Theorem \ref{thm:P-lb} to conclude that $n=\Omega(\sqrt{d}/\alpha\eps)$.  The claimed theorem follows by rescaling $\alpha$ and $d$.
\end{proof}

\section{Lower Bounds for Sparse Mean Estimation}

\begin{defn}
Let $\alpha$ be any real in the interval $(0,\nicehalf)$ and let $k\leq d$ be any integers larger than 1. An algorithm $M$ \emph{solves $(d,k,\alpha)$-sparse mean estimation with sample complexity $n$} if, for any distribution $\bP$ over $\pmo^d$ whose mean $\vec{\mu}$ satisfies $\|\vec{\mu} \|_0\leq k$, it receives $n$ independent samples from $\bP$ as input and outputs $\vec{V}\in[-1,+1]^d$ such that
$\|\vec{\mu}-\vec{V}\|_\infty \leq \alpha$ with probability at least $99/100$. This probability is taken over the randomness of the samples observed by $M$ and $M$ itself.
\end{defn}

\begin{thm} \label{thm:sparse-mean-pan}
If $M$ is an $(\eps,\delta)$-pan-private algorithm that solves $(d,1,\alpha)$-sparse mean estimation and $\delta \log \nicefrac{d}{\delta} \ll \alpha^2 \eps^2 / d$, then its sample complexity is $n=\Omega(\sqrt{d}/\alpha\eps)$.
\end{thm}
\begin{proof}
As before, let $\bP_{d,L,B,\alpha}$ denote a distribution chosen uniformly at random from $\cP_{d,1,\alpha}$.  By construction, the mean of this distribution is 1-sparse, namely it is $B\cdot \vec{e}_{L}$ wehre $\vec{e}_{L}$ is the $L$-th standard basis vector. We show that the total variation distance between $M(\bU^n)$ and $M(\bP^n_{d,L,B,\alpha})$ is at least a constant. This time, we argue that the former is more likely to output a ``small'' vector than the latter. Specifically,
\begin{align*}
&\dtv(M(\bU^n),M(\bP^n_{d,L,B,\alpha}))\\
\geq{}& \pr{}{\| M(\bU^n) \|_\infty \leq \alpha} - \pr{}{\| M(\bP^n_{d,L,B,\alpha}) \|_\infty \leq \alpha} \\
={}& \pr{}{\| M(\bU^n) - \ex{}{\bU} \|_\infty \leq \alpha} - \pr{}{\| M(\bP^n_{d,L,B,\alpha}) \|_\infty \leq \alpha} \tag{$\ex{}{\bU}=\vec{0}$}\\
\geq{}& \frac{99}{100} - \pr{}{\| M(\bP^n_{d,L,B,\alpha}) \|_\infty \leq \alpha}\\
\geq{}& \frac{99}{100} - \pr{}{\| M(\bP^n_{d,L,B,\alpha}) - \ex{}{\bP_{d,L,B,\alpha}} \|_\infty > \alpha} \tag{$\norm{\ex{}{\bP_{d,L,B,\alpha}}}{\infty} = 2\alpha$}\\
\geq{}& \frac{99}{100} - \frac{1}{100}
= \frac{49}{50}
\end{align*}
From Theorem \ref{thm:P-lb}, we conclude that $n = \Omega \left( \sqrt{d}/\alpha \eps \right).$
\end{proof}

The next theorem adapts our proof to the robust shuffle privacy setting:

\begin{thm} \label{thm:sparse-mean-shuffle}
If $\Pi$ is an $(\eps,\delta,1/3)$-robustly shuffle private protocol that solves $(d, 1, \alpha)$-sparse mean estimation and $\delta\log\nicefrac{d}{\delta} \ll \alpha^2\eps^2 / d$, then its sample complexity is $n=\Omega(\sqrt{d}/\alpha\eps)$.
\end{thm}
\begin{proof}
As before, let $\bP_{d,L,B,\alpha}$ denote a distribution chosen uniformly at random from $\cP_{d,1,\alpha}$.  Assume $\Pi$ is a shuffle-model protocol that solves $(d,1,2\alpha/9)$-sparse mean estimation.  We show that $M^\Pi$ distinguishes between $\bU^{n/3}$ and $\bP^{n/3}_{d,L,B,\alpha}$.

\begin{align*}
&\dtv(M^\Pi(\bU^{n/3}),M^\Pi(\bP^{n/3}_{d,L,B,\alpha}))\\
\geq{}& \pr{}{\| M^\Pi(\bU^{n/3}) \|_\infty \leq 2\alpha/9} - \pr{}{\| M^\Pi(\bP^{n/3}_{d,L,B,\alpha}) \|_\infty \leq 2\alpha/9} \\
\geq{}& \pr{}{\| \Pi(\bU^n) \|_\infty \leq 2\alpha/9} - \pr{}{\| \Pi(\bP^{n}_{d,L,B,2\alpha/9}) \|_\infty \leq 2\alpha/9} - \frac{1}{6} \tag{Theorem \ref{thm:shuffle-to-pan}}\\
\geq{}& \frac{49}{50} - \frac{1}{6}
= \frac{61}{75}
\end{align*}
The third inequality comes from repeating the analysis in the proof of Theorem \ref{thm:sparse-mean-pan}. As before, we invoke Theorem \ref{thm:P-lb} to conclude that $n=\Omega(\sqrt{d}/\alpha\eps)$.  The claimed theorem follows from rescaling $\alpha$ and $d$.
\end{proof}
\section{Lower Bounds for Releasing Parity Functions}

\begin{defn}
Let $\alpha$ be any real in the interval $(0,\nicehalf)$ and let $k\leq d$ be any integers larger than 1. An algorithm $M$ \emph{releases width-$k$ parities with error $\alpha$ and sample complexity $n$} if it takes $n$ independent samples from a distribution $\bP$ over $\pmo^d$ and reports a function $F:2^{[d]} \to \R$ such that
$$
\pr{\vec{X} \sim \bP^n \atop F \sim M(\vec{X})}{\forall \ell \subseteq [d],|\ell|\leq k ~ \left| F(\ell) - \ex{x \sim \bP}{\prod_{j \in \ell} x_j} \right| \leq \alpha} \geq 99/100.
$$
\end{defn}

\begin{thm} \label{thm:estimate-parities-pan}
If $M$ is an $(\eps,\delta)$-pan-private algorithm that releases width-$k$ parities with error $\alpha$ and $\delta \log \nicefrac{ \binom{d}{\leq k}}{\delta} \ll \alpha^2 \eps^2 / \binom{d}{\leq k}$, then its sample complexity is $n=\Omega( \sqrt{\binom{d}{\leq k}}/\alpha\eps)$.
\end{thm}

\begin{proof}
Analogous to the previous proofs, let $\bP_{d,L,B,\alpha}$ denote a distribution chosen uniformly at random from the family $\cP_{d,k,\alpha}$. We show that the total variation distance between $M(\bU^n)$ and $M(\bP^n_{d,L,B,\alpha})$ is at least a constant. This time, we argue that the former is more likely to output a function bounded by $\alpha$ than the latter. Specifically,
\begin{align*}
&\dtv(M(\bU^n),M(\bP^n_{d,L,B,\alpha}))\\
\geq{}& \pr{F\sim M(\bU^n)}{\forall \ell \subseteq[d], |\ell|\leq k ~ |F(\ell)|  \leq \alpha} - \pr{ F\sim M(\bP^n_{d,L,B,\alpha})}{\forall \ell \subseteq[d], |\ell|\leq k ~ |F(\ell)|  \leq \alpha} \\
\geq{}& \frac{99}{100} - \pr{ F\sim M(\bP^n_{d,L,B,\alpha})}{\forall \ell \subseteq[d], |\ell|\leq k ~ |F(\ell)|  \leq \alpha} \stepcounter{equation} \tag{\theequation} \label{eq:parity-release-1} \\
\geq{}& \frac{99}{100} - \pr{ F\sim M(\bP^n_{d,L,B,\alpha})}{\forall \ell \subseteq[d], |\ell|\leq k ~ |F(\ell)-2\alpha| > \alpha} \\
\geq{}& \frac{99}{100} - \frac{1}{100} \stepcounter{equation} \tag{\theequation} \label{eq:parity-release-2} \\
={}& \frac{49}{50}
\end{align*}
Inequality \eqref{eq:parity-release-1} follows from the fact that $\forall \ell,b \ex{x \sim \bU}{\prod_{j \in \ell} x_j} = 0$ and the correctness of $M$. Meanwhile \eqref{eq:parity-release-2} follows from the fact that $\forall \ell,b \ex{x \sim \bP_{d,\ell,b,\alpha}}{\prod_{j \in \ell} x_j} = 2\alpha b$ and the correctness of $M$.
From Theorem \ref{thm:P-lb}, we conclude the claimed lower bound on $n$.
\end{proof}

The next theorem adapts our proof to the robust shuffle privacy setting:

\begin{thm} \label{thm:estimate-parities-shuffle}
If $\Pi$ is an $(\eps,\delta,1/3)$-robustly shuffle private protocol that releases width-$k$ parities with error $\alpha$ and $\delta \log \nicefrac{\binom{d}{\leq k}}{\delta} \ll \alpha^2\eps^2 / \binom{d}{\leq k}$, then its sample complexity is $n=\Omega(\sqrt{\binom{d}{\leq k}} / \alpha \eps )$.
\end{thm}
\begin{proof}
Again, let $M^\Pi$ denote the $(\eps,\delta)$-pan-private algorithm given by Theorem \ref{thm:shuffle-to-pan}. We show that $M^\Pi$ distinguishes between $\bU^{n/3}$ and $\bP^{n/3}_{d,L,B,\alpha}$.

\begin{align*}
&\dtv(M^\Pi(\bU^{n/3}), M^\Pi(\bP^{n/3}_{d,L,B,\alpha}))\\
\geq{}& \pr{F\sim M^\Pi(\bU^{n/3})}{\forall \ell \subseteq[d], |\ell|\leq k ~ |F(\ell)|  \leq 2\alpha/9} - \pr{ F\sim M^\Pi(\bP^{n/3}_{d,L,B,\alpha})}{\forall \ell \subseteq[d], |\ell|\leq k ~ |F(\ell)|  \leq 2\alpha/9} \\
\geq{}& \pr{F\sim \Pi(\bU^{n})}{\forall \ell \subseteq[d], |\ell|\leq k ~ |F(\ell)|  \leq 2\alpha/9} - \pr{ F\sim \Pi(\bP^{n}_{d,L,B,2\alpha/9})}{\forall \ell \subseteq[d], |\ell|\leq k ~ |F(\ell)|  \leq 2\alpha/9} -\frac{1}{6} \\
\geq{}& \frac{49}{50} - \frac{1}{6}
= \frac{61}{75}
\end{align*}

The third inequality comes from repeating the analysis in the proof of Theorem \ref{thm:estimate-parities-pan}. As before, we invoke Theorem \ref{thm:P-lb} to conclude the claimed lower bound on $n$.
\end{proof}



\section{Lower Bounds for Selection}

\begin{defn}[Selection]
Let $\alpha$ be any real in the interval $(0,\nicehalf)$ and let $d$ be any integer larger than 1. An algorithm $M$ \emph{solves $(\alpha,d)$-selection with sample complexity $n$} if, for any distribution $\bP$ over $\{\pm 1\}^d$, it takes $n$ independent samples from $\bP$ and selects a coordinate $J \in [d]$ such that $\ex{X\sim \bP}{X_J} \geq \max_j \ex{X\sim \bP}{X_j} - \alpha$ with probability at least $99/100$. This probability is taken over the randomness of the samples observed by $M$ and $M$ itself.
\end{defn}

\begin{thm} \label{thm:selection-pan}
If $M=(M_1,\dots,M_n,M_\Ou)$ is an $(\eps,\delta)$-pan-private algorithm that solves $(\alpha,d)$-selection and $\delta \log \nicefrac{d}{\delta} \ll \alpha^2 \eps^2 / d$, then its sample complexity is $n=\Omega(\sqrt{d}/\alpha\eps)$.
\end{thm}

\begin{proof}
Let $\bP_{d,L,B,\alpha}$ denote a distribution chosen uniformly at random from $\cP_{d,1,\alpha}$. We will again use Theorem \ref{thm:P-lb} but this time our proof will not use $M$ as-is. Instead, we show that $M$ implies another $(\eps,\delta)$-pan-private algorithm $M'$ where the total variation distance between $M'(\bU^n)$ and $M'(\bP^n_{d,L,B,\alpha})$ is at least a positive constant.

Let $\Rad(\alpha)$ be the distribution over $\{\pm 1\}$ with mean $\alpha$. For any $i\in[n]$, define $M'_i$ to be the internal update algorithm that does the following on input $x_i$:
\begin{enumerate}
    \item Draw independent sample $Y_i$ from $\Rad(\alpha)$
    \item $W_i \gets (x_{i,1}, x_{i,2},\dots, x_{i,d}, Y_i)$
    \item Output $M_i(W_i,s_{i-1})$ if $i>1$ else $M_1(W_1)$
\end{enumerate}

$M'$ is the online algorithm defined by $(M'_1,\dots,M'_n,M_\Ou)$. It is $(\eps,\delta)$-pan-private by virtue of using $M$, so it remains to lower bound the TV distance between $M'(\bU^n)$ and $M'(\bP^n_{d,L,B,\alpha})$.
\begin{align*}
&\dtv(M'(\bU^n), M'(\bP^n_{d,L,B,\alpha}))\\
\geq{}& \pr{}{M'(\bU^n)= d+1} - \pr{}{M'(\bP^{n}_{d,L,B,\alpha}) = d+1} \\
={}& \pr{}{M(\bP^n_{d+1,\{d+1\},+1,\alpha/2})= d+1} - \pr{}{M'(\bP^{n}_{d,L,B,\alpha}) = d+1} \stepcounter{equation} \tag{\theequation} \label{eq:selection-pan-1} \\
\geq{}& \frac{99}{100} - \pr{}{M'(\bP^{n}_{d,L,B,\alpha}) = d+1} \stepcounter{equation} \tag{\theequation} \label{eq:selection-pan-2}
\end{align*}
To obtain \eqref{eq:selection-pan-1}, observe that $M'$ feeds into $M$ a stream of $n$ i.i.d. samples from a product distribution where the $(d+1)$-th coordinate has mean $\alpha$, while the rest have mean 0. In our notation, this product distribution is $\bP_{d+1,\{d+1\},+1,\alpha/2}$. Meanwhile, the inequality in \eqref{eq:selection-pan-2} follows from the fact that $M$ solves $(\alpha,d+1)$-selection.

We now upper bound the probability in \eqref{eq:selection-pan-2}. 
\begin{align*}
&\pr{}{M'(\bP^{n}_{d,L,B,\alpha}) = d+1} \\
={}& \pr{}{M'(\bP^{n}_{d,L,B,\alpha}) = d+1,~ B=-1} + \pr{}{M'(\bP^{n}_{d,L,B,\alpha}) = d+1,~ B=+1} \\
\leq{}& \half + \pr{}{M'(\bP^{n}_{d,L,B,\alpha}) = d+1,~ B=+1} \\
={}& \half + \sum_{j=1}^d \pr{}{M'(\bP^{n}_{d,\{j\},+1,\alpha}) = d+1} \cdot \pr{}{T=\{j\}, B=+1} \stepcounter{equation} \tag{\theequation} \label{eq:selection-pan-3}
\end{align*}
We focus our attention on the first term in the product. Observe that $M'$ feeds to $M$ a stream of $n$ iid samples drawn from a distribution where coordinate $j \in [d]$ has mean $2\alpha$, coordinate $d+1$ has mean $\alpha$, and every other coordinate has mean $0$. Here, $j$ is the correct answer to $(\alpha,d+1)$ selection; since $M$ solves $(\alpha,d+1)$-selection, $\pr{}{M'(\bP^{n}_{d,\{j\},+1,\alpha}) = d+1} \leq \nicefrac{1}{100}$. As a result,
\begin{align*}
\eqref{eq:selection-pan-3} &\leq \half + \sum_{j=1}^d \frac{1}{100} \cdot \pr{}{L=\{j\}, B=+1}
    = \half + \frac{1}{100}
    = \frac{51}{100}
\end{align*}

Thus, $\dtv(M'(\bU^n), M'(\bP^n_{d,L,B,\alpha})) \geq \nicefrac{99}{100} - \nicefrac{51}{100} = \nicefrac{12}{25}$. From Theorem \ref{thm:P-lb}, we conclude that $$n = \Omega\left(\frac{1}{\eps \| \cP_{d,1,\alpha} \|_{\infty \to 2}}\right) = \Omega \left( \sqrt{d}/\alpha \eps \right).$$
The claimed theorem now follows by rescaling $d$.
\end{proof}

The next theorem adapts our proof to the robust shuffle privacy setting:

\begin{thm} \label{thm:selection-shuffle}
If $\Pi$ is an $(\eps,\delta,1/3)$-robustly shuffle private protocol that solves $(\alpha, d)$-selection and $\delta\log\nicefrac{d}{\delta} \ll \alpha^2\eps^2 / d$, then its sample complexity is $n=\Omega(\sqrt{d}/\alpha\eps)$.
\end{thm}
\begin{proof}
As before, let $\bP_{d,L,B,\alpha}$ denote a distribution chosen uniformly at random from $\cP_{d,1,\alpha}$. Let $M^\Pi$ denote the $(\eps,\delta)$-pan-private algorithm given by Theorem \ref{thm:shuffle-to-pan}. Like the preceding proof, we show that $M^\Pi$ implies an $(\eps,\delta)$-pan-private algorithm $M'$ that distinguishes between $\bU^{n/3}$ and $\bP^{n/3}_{d,L,B,\alpha}$. We construct $M'$ essentially identically, the differences being that we have $n/3$ instead of $n$ internal algorithms.

To bound the total variation distance between $M'(\bU^{n/3})$ and $M'(\bP^{n/3}_{d,L,B,\alpha})$ we follow the same steps as in the proof of Theorem \ref{thm:selection-pan} except we need to account for the reduction from robust shuffle privacy to pan-privacy (Theorem \ref{thm:shuffle-to-pan})
\begin{align*}
&\dtv(M'(\bU^{n/3}), M'(\bP^{n/3}_{d,L,B,\alpha}))\\
\geq{}& \pr{}{M'(\bU^{n/3})= d+1} - \pr{}{M'(\bP^{n/3}_{d,L,B,\alpha}) = d+1} \\
={}& \pr{}{M^\Pi(\bP^{n/3}_{d+1,\{d+1\},+1,\alpha/2})= d+1} - \pr{}{M'(\bP^{n/3}_{d,L,B,\alpha}) = d+1} \\
\geq{}& \pr{}{\Pi(\bP^{n}_{d+1,\{d+1\},+1,\alpha/9})= d+1} -\frac{1}{6} - \pr{}{M'(\bP^{n/3}_{d,L,B,\alpha}) = d+1} \tag{Theorem \ref{thm:shuffle-to-pan}} \\
\geq{}& \frac{99}{100} - \frac{1}{6} - \pr{}{M'(\bP^{n/3}_{d,L,B,\alpha}) = d+1} \\
\geq{}& \frac{99}{100} - \frac{1}{6} - \left( \half + \frac{1}{100} + \frac{1}{6} \right) \tag{Theorem \ref{thm:shuffle-to-pan}}\\
={}& \frac{11}{75}
\end{align*}

As before, we invoke Theorem \ref{thm:P-lb} to conclude that $n=\Omega(\sqrt{d}/\alpha\eps)$.  The claimed theorem follows from rescaling $\alpha$ and $d$.
\end{proof}

\section{Lower Bounds for Learning Signed Parity Functions}
In this section, we take $\cX = \pmo^{d+1}$ and interpret the bits at index $d+1$ to be labels of the strings. Our focus will be on signed parity functions: given a tuple $(\ell,b) \in 2^{[d]} \times \pmo$ and a string $x\in\cX$, we would like labels to predict the value $b\cdot \prod_{j \in \ell} x_j$. Specifically, for any distribution $\bP$ over $\cX$, we define error function $$\mathrm{err}_\bP(\ell,b) := \pr{X\sim \bP}{b\cdot \prod_{j \in \ell} X_j \neq X_{d+1}},$$ to be the probability of misclassifying a random test example.

\begin{defn}
Let $\alpha \in (0,\half)$ be a parameter and let $1 \leq k \leq d$ be integers.  An algorithm $M$ \emph{learns width-$k$ signed parities with error $\alpha$ and sample complexity $n$} if it takes $n$ independent samples from a distribution $\bP$ over $\cX$ and reports a tuple $(L,B)\in 2^{[d]} \times \pmo$ such that, with probability at least $99/100$, 
$$\mathrm{err}_\bP(L,B) < \min_{\ell,b} \mathrm{err}_\bP(\ell,b) + \alpha.$$ 
This probability is taken over the randomness of the samples and over $M$.
\end{defn}

For this problem, we will use a variant of our family of distributions: for a parameter $\alpha \in [0,\nicehalf]$, a set $\ell \subseteq [d]$, and a bit $b \in \pmo$, we define the distribution $\bQ_{d,\ell,b,\alpha}$ to have probability mass function
\begin{equation} \label{eq:Qtb}
    \bQ_{d,\ell,b,\alpha}(x) =
    \begin{cases}
        (1+2\alpha) 2^{-d-1} & \textrm{if $b\cdot \prod_{j \in \ell} x_j = x_{d+1}$} \\
        (1-2\alpha) 2^{-d-1} & \textrm{if $b\cdot \prod_{j \in \ell} x_j = -x_{d+1}$}
    \end{cases}
\end{equation}

\begin{fact}
\label{fact:signed-parity}
For any $(\ell',b') \neq (\ell,b)$,
\begin{align*}
\pr{X\sim \bQ_{d,\ell,b,\alpha} }{b\cdot \prod_{j \in \ell} X_j = X_{d+1}} &= \half + \alpha \\
\pr{X\sim \bQ_{d,\ell,b,\alpha} }{b'\cdot \prod_{j \in \ell'} X_j = X_{d+1}} &\leq \half
\end{align*}
\end{fact}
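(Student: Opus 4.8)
The plan is to observe that $\bQ_{d,\ell,b,\alpha}$ is just a member of the family from Section~\ref{sec:hard-distributions} on $(d+1)$-bit strings, with the label coordinate $d+1$ absorbed into the parity set. Writing $\tilde\ell := \ell \cup \{d+1\}$, the event ``$b\cdot\prod_{j\in\ell}x_j = x_{d+1}$'' is the same as ``$\prod_{j\in\tilde\ell}x_j = b$'' (multiply through by $x_{d+1}$ and use $x_{d+1}^2 = b^2 = 1$), so comparing~\eqref{eq:Qtb} with~\eqref{eq:Ptb} shows $\bQ_{d,\ell,b,\alpha} = \bP_{d+1,\tilde\ell,b,\alpha}$. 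In particular, by the remark following~\eqref{eq:Ptb}, under $X \sim \bQ_{d,\ell,b,\alpha}$ we have $\ex{}{\prod_{j\in\tilde\ell}X_j} = 2\alpha b$ while $\ex{}{\prod_{j\in S}X_j} = 0$ for every non-empty $S \subseteq [d+1]$ with $S \neq \tilde\ell$.

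From here both claims should fall out of the elementary identity $\pr{}{Z = c} = \tfrac12(1 + c\cdot\ex{}{Z})$, valid for any $\pm 1$-valued random variable $Z$ and any $c \in \pmo$. For the first claim I would apply it with $Z = \prod_{j\in\tilde\ell}X_j$ and $c = b$, getting $\tfrac12(1 + 2\alpha b^2) = \tfrac12 + \alpha$. For the second, given $(\ell',b') \neq (\ell,b)$, I would set $\tilde\ell' := \ell' \cup \{d+1\}$ and apply the identity with $Z = \prod_{j\in\tilde\ell'}X_j$ and $c = b'$: if $\ell' \neq \ell$ then $\tilde\ell' \neq \tilde\ell$ is non-empty, so $\ex{}{Z} = 0$ and the probability is exactly $\tfrac12$; if instead $\ell' = \ell$ (which forces $b' = -b$), then $\ex{}{Z} = 2\alpha b$ and the probability is $\tfrac12(1 - 2\alpha b^2) = \tfrac12 - \alpha$. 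Either way it is at most $\tfrac12$.

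There is no genuinely hard step here; the only thing to watch is the index-set bookkeeping, namely that $\tilde\ell$ and $\tilde\ell'$ always contain $d+1$ and are therefore non-empty even when $\ell$ or $\ell'$ is empty, and that $\tilde\ell' = \tilde\ell$ exactly when $\ell' = \ell$ --- this is what makes the case split above clean. If one prefers to avoid invoking the $\bP_{d,\ell,b,\alpha}$ machinery, the identical statements follow from a direct counting argument: $\bQ_{d,\ell,b,\alpha}$ assigns mass $(1+2\alpha)2^{-d-1}$ to each of the $2^d$ strings with $b\cdot\prod_{j\in\ell}x_j = x_{d+1}$ and mass $(1-2\alpha)2^{-d-1}$ to the other $2^d$ strings, and for $(\ell',b') \neq (\ell,b)$ the set $\{x : b'\cdot\prod_{j\in\ell'}x_j = x_{d+1}\}$ either splits into $2^{d-1}$ strings in each half (when $\ell' \neq \ell$) or coincides exactly with the low-mass half (when $\ell' = \ell$ and $b' = -b$).
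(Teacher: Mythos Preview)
Your argument is correct. The paper states Fact~\ref{fact:signed-parity} without proof, so there is nothing to compare against; your reduction to the family $\bP_{d+1,\tilde\ell,b,\alpha}$ via $\tilde\ell = \ell \cup \{d+1\}$ together with the identity $\pr{}{Z=c}=\tfrac12(1+c\,\ex{}{Z})$ is a clean and complete justification, and the alternative direct counting sketch you give is equally valid.
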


For dimension $d$, a parameter $k \leq d$, and $\alpha \in [0,\nicehalf]$, we define the family
\begin{equation} \label{eq:Qdk}
\cQ_{d,k,\alpha} = \{ \bQ_{d,\ell,b,\alpha} : \ell \subseteq [d], |\ell| \leq k, b \in \pmo \}
\end{equation}

\begin{fact}
\label{fact:Qdk-size}
The size of the family $\cQ_{d,k,\alpha}$ is $2\binom{d}{\leq k}+2$.
\end{fact}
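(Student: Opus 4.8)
The plan is to reuse the counting idea behind Fact~\ref{fact:Pdk-size}, while paying attention to one difference: the index set defining $\cQ_{d,k,\alpha}$ allows $\ell=\emptyset$ (it only requires $|\ell|\le k$), whereas the quantity $\binom{d}{\leq k}=\sum_{j=1}^{k}\binom{d}{j}$ counts only the \emph{nonempty} sets of size at most $k$. The empty set is exactly what produces the extra additive $2$ in the claimed size, so the proof amounts to counting pairs $(\ell,b)$ together with a check that distinct pairs give distinct distributions.

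First I would put \eqref{eq:Qtb} into the multiplicative form $\bQ_{d,\ell,b,\alpha}(x)=\bigl(1+2\alpha b\cdot\prod_{j\in\ell}x_j\cdot x_{d+1}\bigr)2^{-d-1}$, exactly as $\bP_{d,\ell,b,\alpha}$ was rewritten inside the proof of Lemma~\ref{lem:P-sample-bound}; the two branches of \eqref{eq:Qtb} correspond to $b\cdot\prod_{j\in\ell}x_j\cdot x_{d+1}=+1$ and $=-1$. Reading this as a Fourier expansion over $\pmo^{d+1}$, the function $\bQ_{d,\ell,b,\alpha}$ has constant coefficient $2^{-d-1}$, coefficient $2\alpha b\cdot 2^{-d-1}$ on the character $\chi_{\ell\cup\{d+1\}}$, and zero on every other character. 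Since $\alpha\in(0,\nicehalf)$ and $d+1\notin\ell$, this forces the map $(\ell,b)\mapsto\bQ_{d,\ell,b,\alpha}$ to be injective: from the distribution one recovers the unique nonconstant character with nonzero coefficient, hence the set $\ell\cup\{d+1\}$ and therefore $\ell$, and then $b$ as the sign of that coefficient.

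Finally I would count the index set: the number of $\ell\subseteq[d]$ with $|\ell|\le k$ is $1+\sum_{j=1}^{k}\binom{d}{j}=\binom{d}{\leq k}+1$, the ``$1$'' being $\ell=\emptyset$, and there are two choices of $b\in\pmo$, giving $2\binom{d}{\leq k}+2$ pairs $(\ell,b)$; by injectivity this equals $|\cQ_{d,k,\alpha}|$. There is no real obstacle here; the only things to watch are the bookkeeping about whether the empty set is included and the observation that injectivity genuinely uses $\alpha>0$ (for $\alpha=0$ every $\bQ_{d,\ell,b,0}$ collapses to the uniform distribution on $\pmo^{d+1}$, so the statement is read in the regime $\alpha>0$ in which the family is actually used).
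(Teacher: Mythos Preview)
Your proof is correct. The paper states this as a Fact without proof, treating the count as immediate from the definition of $\cQ_{d,k,\alpha}$ and the convention $\binom{d}{\leq k}=\sum_{j=1}^{k}\binom{d}{j}$; your argument fills in exactly the details one would expect, including the injectivity check via the Fourier expansion and the caveat that the count is only literally correct for $\alpha>0$.
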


\begin{fact}
\label{fact:Qdk-mixture}
The uniform mixture of the family $\cQ_{d,k,\alpha}$ is uniform over $\cX$.
\end{fact}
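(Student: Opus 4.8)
The plan is to mimic the manipulation used in the proof of Lemma~\ref{lem:P-sample-bound}: rewrite each member of the family in a compact multiplicative form, average over the index set, and then exploit a term-by-term cancellation over the sign bit $b$. This is the direct analogue of Fact~\ref{fact:Pdk-mixture}.

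First I would observe that, since every coordinate of $x \in \cX = \pmo^{d+1}$ is $\pm 1$, the condition $b\cdot\prod_{j\in\ell}x_j = x_{d+1}$ appearing in \eqref{eq:Qtb} is equivalent to $b\cdot x_{d+1}\cdot\prod_{j\in\ell}x_j = 1$, and the complementary condition to $b\cdot x_{d+1}\cdot\prod_{j\in\ell}x_j = -1$. Hence \eqref{eq:Qtb} can be rewritten uniformly as
$$
\bQ_{d,\ell,b,\alpha}(x) = \left(1 + 2\alpha\cdot b\cdot x_{d+1}\cdot\prod_{j\in\ell}x_j\right)2^{-d-1}.
$$

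Next I would average this expression over a uniformly random $(\ell,b)$ drawn from the index set of $\cQ_{d,k,\alpha}$, i.e.\ all $\ell\subseteq[d]$ with $|\ell|\le k$ (including $\ell=\emptyset$) together with $b\in\pmo$. The constant term contributes exactly $2^{-d-1}$ at every $x$. For the remaining term I would group the sum by $\ell$: for each fixed $\ell$, summing $b\cdot x_{d+1}\prod_{j\in\ell}x_j$ over $b\in\{+1,-1\}$ gives $x_{d+1}\prod_{j\in\ell}x_j\cdot(1+(-1)) = 0$. So the correction term vanishes identically, the mixture assigns mass $2^{-d-1}$ to every $x\in\cX$, and thus equals the uniform distribution over $\cX$.

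There is essentially no obstacle here; the computation is routine. The only point worth flagging is that the cancellation happens separately for each $\ell$ purely because $\sum_{b\in\pmo}b = 0$, so the argument never needs the precise value of $|\cQ_{d,k,\alpha}|$ from Fact~\ref{fact:Qdk-size}.
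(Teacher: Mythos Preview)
The paper does not actually prove Fact~\ref{fact:Qdk-mixture}; it is simply stated as a fact. Your argument is correct and is exactly the natural one: the compact form $\bQ_{d,\ell,b,\alpha}(x) = (1 + 2\alpha b\, x_{d+1}\prod_{j\in\ell}x_j)\,2^{-d-1}$ that you derive is the same one the paper uses in the appendix proof of Lemma~\ref{lem:Q-sample-bound}, and the cancellation over $b\in\pmo$ immediately yields the uniform distribution.
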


\begin{lem}
\label{lem:Q-sample-bound}
For every $d \in \N$, $k \leq d$, and $\alpha \in [0,\nicehalf]$,
    $$
        \| \cQ_{d,k,\alpha} \|_{\infty \to 2}^{2} \leq \frac{4\alpha^2}{\binom{d}{\leq k}}
    $$
\end{lem}

For brevity, we defer the proof to the Appendix.

\begin{thm} \label{thm:learning-parities-pan}
If $M=(M_1,\dots,M_n,M_O)$ is an $(\eps,\delta)$-pan-private algorithm that learns width-$k$ signed parities with error $\alpha$ and $\delta \log  \nicefrac{\binom{d}{\leq k}}{\delta} \ll \alpha^2 \eps^2 / \binom{d}{\leq k}$, then its sample complexity is $n=\Omega( \sqrt{\binom{d}{\leq k}}/\alpha\eps)$.
\end{thm}

\begin{algorithm}[t]
\caption{$M'$, an online algorithm }
\label{alg:learning-parities-pan}

\KwIn{Data stream $\vec{x}\in \cX^m$; access to online algorithm $M:\cX^n \to 2^{[d]} \times \pmo$}
\KwOut{A random variable $Z \in \R$}

$S_1 \gets M_1(x_1)$

\For{$i \in [2,n]$}{
    $S_i \gets M_i(x_i, S_{i-1})$
}

\For{$i \in [n+1, m]$}{
    \If{$i=n+1$}{
        $(\hat{L},\hat{B}) \gets M_\Ou(S_n)$

        $C \sim \Lap(1/\eps)$
    }
    \Else{ $(\hat{L},\hat{B}, C) \gets S_{i-1}$ }

    \If{$\prod_{j \in \hat{L}} x_{i,j} = x_{i, d+1} \cdot \hat{B}$}{
        $C \gets C +1$
    }
    
    $S_i \gets (\hat{L},\hat{B}, C)$
}

$L\sim \Lap(1/\eps)$

\Return{$Z\gets C + L$}
\end{algorithm}

\begin{proof}
Analogous to previous proofs, let $\bQ_{d,L,B,\alpha}$ denote a distribution chosen uniformly at random from $\cQ_{d,k,\alpha}$. We argue that $M$ implies an $(\eps,\delta)$-pan-private algorithm $M'$ which takes $m = n + \Theta(1/\alpha \eps)$ values from $\cX$ as input and outputs a real number such that $\dtv(M'(\bU^m), M'(\bQ^m_{d,L,B,\alpha}) )$ is larger than a constant.

We specify $M'$ in Algorithm \ref{alg:learning-parities-pan}. Although it does not explicitly have the structure in Definition \ref{def:online}, it is straightforward to decompose it into a sequence of algorithms. At a high level, $M'$ has a training and a testing phase. In the training phase, it will execute $M$ on the first $n$ samples to obtain a signed parity function $(\hat{L},\hat{B})$. In the testing phase, $M'$ will evaluate the function on the remaining samples and maintain a pan-private estimate of the number of correct predictions. If the samples are drawn from $\bU$, then any choice of parity function makes a correct prediction with only $\nicehalf$ probability. But if the samples are drawn from any distribution $\bQ_{d,\ell,b,\alpha} \in \cQ_{d,k,\alpha}$, we know that $(\hat{L},\hat{B})=(\ell,b)$ with $\geq 99/100$ probability; conditioned on this event, our predictions will be correct with probability $\nicehalf + \alpha$. Thus, the count of correct predictions will reliably differentiate between the two input cases.

\mypar{Pan-privacy}: We will first prove privacy for user $i$ and intrusion time $t$. Recall that the adversary's view is the tuple $(M'_\cI(\vec{x}_{\leq t}), M'_O( M'_\cI(\vec{x})) )$; for brevity, we shall use the notation $(S_t,Z)$. If $i \leq n$ and $t \leq n$, the tuple is a post-processing of $(M_\cI(\vec{x}_{\leq t}), M_O( M_\cI(\vec{x})) )$ which we know to be $(\eps,\delta)$-private. If $i \leq n$ but $t > n$, the adversary's view is a post-processing of $ M_\cI(\vec{x})$ which is again $(\eps,\delta)$-private.

If $i > n$ but $t \leq n$, the only influence $S_t$ has on $Z$ is the choice of $(\hat{L},\hat{B})$; it suffices to prove that $Z$ is differentially private for any choice of $(\hat{L},\hat{B})$. Let $\ind(\cdot)$ be the $\zo$ indicator function. Observe that $Z \sim \Lap(1/\eps) + \sum_{u=n+1}^m \ind(\prod_{j \in \hat{L}} x_{u,j} = x_{u,d+1}\cdot \hat{B}) + \Lap(1/\eps)$. $\eps$-differential privacy follows the observation that the summation is 1-sensitive and the privacy of the Laplace mechanism.

If $i > n$ and $t > n$, we consider two further cases. When $t \geq i$, observe that $Z$ is a post-processing of $S_t$. Also observe that $S_t \sim \Lap(1/\eps) + \sum_{u=n+1}^t \ind(\prod_{j \in \hat{L}} x_{u,j} = x_{u,d+1}\cdot \hat{B})$. So we can again invoke the privacy of the Laplace mechanism. When $t<i$, we can show that $Z$ is differentially private conditioned on any realization of $S_t = (\hat{L}, \hat{B}, C_t)$: because $Z\sim \Lap(1/\eps) + C_t + \sum_{u=t+1}^m \ind(\prod_{j \in \hat{L}} x_{u,j} = x_{u,d+1}\cdot \hat{B})$ and $i \in [t+1,m]$, we invoke the privacy of the Laplace mechanism one more.

\mypar{Bound on TV distance}: Now we show that the total variation distance between $M'(\bU^m)$ and $M'(\bQ^m_{d,L,B,\alpha})$ is larger than a constant. Notice that, for any $\tau \in \R$,
\begin{align*}
&\dtv(M'(\bQ^m_{d,L,B,\alpha}) , M'(\bU^m)) \\
\geq{}& \pr{}{M'(\bQ^m_{d,L,B,\alpha}) > \tau} - \pr{}{M'(\bU^m) > \tau} \\
={}& \left( \sum_{\ell\subseteq [d], |\ell|\leq k \atop b \in \pmo} \pr{}{M'(\bQ^m_{d,\ell,b,\alpha}) > \tau} \cdot \pr{}{(L,B)=(\ell,b)} \right) - \pr{}{M'(\bU^m) > \tau} \\
={}& \left( \sum_{\ell\subseteq [d], |\ell|\leq k \atop b \in \pmo} \pr{}{M'(\bQ^m_{d,\ell,b,\alpha}) > \tau ~|~ (\hat{L},\hat{B}) = (\ell,b)} \cdot \pr{}{(\hat{L},\hat{B}) = (\ell,b)} \cdot \pr{}{(L,B)=(\ell,b)} \right) - \pr{}{M'(\bU^m) > \tau} \\
\geq{}& \left( \sum_{\ell\subseteq [d], |\ell|\leq k \atop b \in \pmo} \pr{}{M'(\bQ^m_{d,\ell,b,\alpha}) > \tau ~|~ (\hat{L},\hat{B}) = (\ell,b)} \cdot \frac{99}{100} \cdot \pr{}{(L,B)=(\ell,b)} \right) - \pr{}{M'(\bU^m) > \tau} \stepcounter{equation} \tag{\theequation} \label{eq:learn-parities-pan}
\end{align*}

\eqref{eq:learn-parities-pan} comes from the fact that $M$ learns parities. Notice that, conditioned on $(\hat{L},\hat{B}) = (\ell,b)$, Fact \ref{fact:signed-parity} implies $M'(\bQ^m_{d,\ell,b,\alpha}) $ is a sample from the convolution $\Bin(m-n,\nicehalf + \alpha) + \Lap(1/\eps) +\Lap(1/\eps)$ with probability $\geq \nicefrac{99}{100}$.

Meanwhile, note that the equality $\pr{X\sim \bU}{\prod_{j\in \ell} X_j = X_{d+1} \cdot b} = \nicehalf$ holds for any parity function $(\ell,b)$. Consequently, the output of the algorithm $M'(\bU^m) $ is a sample from the convolution $\Bin(m-n,\nicehalf) + \Lap(1/\eps) +\Lap(1/\eps)$.

Because $m-n = \Theta(1/\alpha\eps)$, we can use a Chernoff bound to argue that there is some $\tau$ where
\begin{align*}
\eqref{eq:learn-parities-pan} &\geq \left( \sum_{\ell\subseteq [d], |\ell|\leq k \atop b \in \pmo} \frac{99}{100} \cdot \frac{99}{100} \cdot \pr{}{(L,B)=(\ell,b)} \right) - \frac{1}{100} \\
&= \frac{99^2 - 100}{10000}
\end{align*}

Lemma \ref{lem:Q-sample-bound} and Theorem \ref{thm:main-lb} imply $m=\Omega\left( \sqrt{\binom{d}{\leq k}}/\alpha\eps \right)$ and, in turn, $n = \Omega\left( \sqrt{\binom{d}{\leq k}}/\alpha\eps \right)$.
\end{proof}

The next theorem adapts our proof to the robust shuffle privacy setting:
\begin{thm} \label{thm:learning-parities-shuffle}
If $\Pi$ is an $(\eps,\delta,\nicefrac{1}{3})$-robustly shuffle private protocol that learns width-$k$ signed parities with error $\alpha$ and $\delta \log \nicefrac{\binom{d}{\leq k}}{\delta} \ll \alpha^2 \eps^2 / \binom{d}{\leq k}$, then its sample complexity is $n=\Omega( \sqrt{\binom{d}{\leq k}}/\alpha\eps )$.
\end{thm}
\begin{proof}
We repeat the construction, this time building $M'$ atop $M^\Pi$ (Theorem \ref{thm:shuffle-to-pan}). To prove pan-privacy of $M'$, we follow the same steps as in the proof of Theorem \ref{thm:learning-parities-pan}; we do not replicate the text here.

Lower bounding the total variation distance between $M'(\bU^m)$ and $M'(\bQ^n_{d,L,B,\alpha})$ is also very similar though we do have to account for the change from $\Pi$ to $M^\Pi$:
\begin{align*}
&\dtv(M'(\bQ^m_{d,L,B,\alpha}) , M'(\bU^m)) \\
={}& \left( \sum_{\ell\subseteq [d], |\ell|\leq k \atop b \in \pmo} \pr{}{M'(\bQ^m_{d,\ell,b,\alpha}) > \tau ~|~ (\hat{L},\hat{B}) = (\ell,b)} \cdot \pr{}{(\hat{L},\hat{B}) = (\ell,b)} \cdot \pr{}{(L,B)=(\ell,b)} \right) - \pr{}{M'(\bU^m) > \tau} \\
\geq{}& \left( \sum_{\ell\subseteq [d], |\ell|\leq k \atop b \in \pmo} \pr{}{M'(\bQ^m_{d,\ell,b,\alpha}) > \tau ~|~ (\hat{L},\hat{B}) = (\ell,b)} \cdot \left(\frac{99}{100} - \frac{1}{6} \right) \cdot \pr{}{(L,B)=(\ell,b)} \right) - \pr{}{M'(\bU^m) > \tau} \\
\geq{}& \left( \sum_{\ell\subseteq [d], |\ell|\leq k \atop b \in \pmo} \frac{99}{100} \cdot \frac{247}{300} \cdot \pr{}{(L,B)=(\ell,b)} \right) - \frac{1}{100} \\
={}& \frac{99}{100} \cdot \frac{247}{300} - \frac{1}{100} = \frac{8051}{10000}
\end{align*}
Lemma \ref{lem:Q-sample-bound} and Theorem \ref{thm:main-lb} imply $m=\Omega\left( \sqrt{\binom{d}{\leq k}}/\alpha\eps \right)$ and, in turn, $n = \Omega\left( \sqrt{\binom{d}{\leq k}}/\alpha\eps \right)$.
\end{proof}

\section*{Acknowledgments}
We are grateful to Cl\'ement Canonne for many helpful discussions related to the proof of Lemma~\ref{lem:main-lb}.

\addcontentsline{toc}{section}{References}
\bibliographystyle{alpha}
\bibliography{refs}

\appendix
\section{Proofs for Supporting Facts for Theorem \ref{thm:main-lb}}

\label{sec:supporting-proofs}

For completeness, we prove the statements used by the proof of Theorem \ref{thm:main-lb}.
\begin{fact} [Fact~\ref{fact:tvchainrule} Restated]
	If $(A,B)$ and $(A,B')$ are joint distributions on the domain $\cA \times \cB$, then $$\dtv((A,B),(A,B')) \leq \ex{a \sim A}{\dtv(B \cond{A = a}, B' \cond{A=a})}$$
\end{fact}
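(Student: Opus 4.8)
The plan is to reduce to the one‑coordinate case by conditioning on the shared first coordinate. Both $(A,B)$ and $(A,B')$ are understood to have the \emph{same} marginal on $\cA$; this is implicit in the statement — it is what makes $B \cond{A=a}$ and $B' \cond{A=a}$ well‑defined — and it is how the fact is invoked in the proof of Lemma~\ref{lem:main-lb}, where in both $\bQ_{i-1}$ and $\bQ_{i}$ the first coordinate is distributed as the marginal of $S_i$. Write $\mu,\nu$ for the two joint laws, $p$ for the common $\cA$‑marginal, and $\mu_a = B \cond{A=a}$, $\nu_a = B' \cond{A=a}$ for the conditionals on $\cB$, so that $\mu(a,b) = p(a)\,\mu_a(b)$ and $\nu(a,b) = p(a)\,\nu_a(b)$.

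First I would invoke the $L_1$ characterization $\dtv(\mu,\nu) = \tfrac12\sum_{a,b}|\mu(a,b)-\nu(a,b)|$ (using a common dominating measure in the non‑discrete case). Substituting the factorizations and pulling the factor $p(a)\ge 0$ out of the absolute value gives
\[
\dtv(\mu,\nu) \;=\; \frac12 \sum_{a \in \cA} p(a) \sum_{b \in \cB} \big| \mu_a(b) - \nu_a(b) \big| \;=\; \sum_{a \in \cA} p(a)\, \dtv(\mu_a,\nu_a),
\]
where the last step recognizes the inner sum as $2\,\dtv(\mu_a,\nu_a)$. The right‑hand side is exactly $\ex{a \sim A}{\dtv(B \cond{A=a}, B' \cond{A=a})}$, so in fact equality holds, which is stronger than the claimed inequality.

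An alternative route that sidesteps probability mass functions uses the variational form $\dtv(\mu,\nu) = \sup_{g : \cA\times\cB \to [0,1]} \big(\ex{z\sim\mu}{g(z)} - \ex{z\sim\nu}{g(z)}\big)$: for a fixed such $g$, condition on the first coordinate to write the difference as $\ex{a\sim A}{\,\ex{b\sim\mu_a}{g(a,b)} - \ex{b\sim\nu_a}{g(a,b)}}$, bound the bracketed quantity for each fixed $a$ by $\dtv(\mu_a,\nu_a)$ since $g(a,\cdot)$ takes values in $[0,1]$, take expectations over $a$, and finally take the supremum over $g$. Either way there is essentially no obstacle here; the only point that needs care is the standing assumption that the two joint laws have the same first marginal, without which the right‑hand side is not even well‑defined.
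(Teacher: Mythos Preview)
Your proposal is correct. Your second (variational) route is exactly the paper's proof: take an arbitrary event $T \subseteq \cA \times \cB$, condition on $A=a$ to write the difference of probabilities as an expectation over $a$, bound the inner term by $\dtv(B\cond{A=a},B'\cond{A=a})$ since $T\cond{A=a}$ is a fixed subset of $\cB$, and then take the supremum over $T$. Your first ($L_1$) route is a mild variant that actually yields equality rather than just the inequality; the paper does not go this way and only records the bound it needs.
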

\begin{proof}
	Given a set $T \subseteq \cA \times \cB$, define $T\cond{A = a} = \{ b : (a,b) \in T\}$.  Then, we have
	\begin{align*}
	&\dtv((A,B),(A,B')) \\
	={} &\sup_{T} \pr{}{(A,B) \in T} - \pr{}{(A,B') \in T} \\
	={} &\sup_{T} \ex{a \sim A}{\pr{}{ B \cond{A = a} \in T \cond{A=a}} - \pr{}{ B' \cond{A = a} \in T \cond{A=a}} } \\
	\leq{} &\sup_{T} \ex{a \sim A}{\dtv(B\cond{A=a}, B'\cond{A=a})} \\
	={} &\ex{a \sim A}{\dtv(B\cond{A=a}, B'\cond{A=a})}
	\end{align*}
	This completes the proof.
\end{proof}

\begin{fact} [Fact~\ref{fact:markovchain} Restated]
	If $(A,B,C)$ are jointly distributed random variables on $\cA \times \cB \times \cC$ and $A$ and $B$ are independent conditioned on $C$, then for every $a \in \mathrm{supp}(A)$, $$\dtv(B \cond{A = a}, B) \leq \dtv(C \cond{A=a}, C)$$
\end{fact}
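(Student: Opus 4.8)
The plan is to exploit the conditional independence to write both distributions $B \cond{A = a}$ and $B$ as the output of \emph{one and the same} stochastic channel applied to two different input distributions over $\cC$; once that is done, the data-processing inequality for total variation distance immediately gives the bound.

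Concretely, first I would introduce the Markov kernel $\kappa$ from $\cC$ to $\cB$ defined by $\kappa(c) := (B \cond{C = c})$. The crucial consequence of $A$ and $B$ being independent given $C$ is that, for every relevant $c$, the law of $B$ given $\{C = c, A = a\}$ coincides with the law of $B$ given $\{C = c\}$, namely $\kappa(c)$; in particular it does not depend on $a$. Averaging this identity over $c$ drawn according to $C \cond{A = a}$ shows that $B \cond{A = a}$ is exactly $\kappa$ applied to the distribution $C \cond{A = a}$. Averaging instead over $c$ drawn from the marginal law of $C$ shows that the marginal $B$ is $\kappa$ applied to the marginal law of $C$. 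Thus both distributions of interest are images of distributions over $\cC$ under the same map $\kappa$. Then I would invoke the data-processing inequality, that post-processing by a fixed stochastic map cannot increase total variation distance, applied with the map $\kappa$ to the inputs $C \cond{A = a}$ and $C$, to conclude
$$\dtv(B \cond{A = a}, B) \leq \dtv(C \cond{A = a}, C),$$
which is the claim.

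The only point needing a little care — the ``hard part,'' such as it is — is the handling of conditioning events of probability zero when defining $\kappa(c)$ and justifying that $(B \cond{C = c, A = a}) = (B \cond{C = c})$ for $(C \cond{A=a})$-almost-every $c$; in the discrete setting this is trivial, and in general it is the standard disintegration argument. If one wishes to sidestep even naming $\kappa$, the identical proof can be written out with explicit averages: for any event $S \subseteq \cB$ put $g(c) := \pr{}{B \in S \cond C = c} \in [0,1]$, observe that $\pr{}{B \in S \cond A = a} = \ex{c \sim (C \cond{A = a})}{g(c)}$ and $\pr{}{B \in S} = \ex{c \sim C}{g(c)}$, subtract, and use that $|\ex{c \sim \mu}{g(c)} - \ex{c \sim \nu}{g(c)}| \leq \dtv(\mu, \nu)$ for every $[0,1]$-valued $g$; taking the supremum over $S$ finishes it.
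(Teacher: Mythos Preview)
Your proposal is correct and essentially identical to the paper's proof: the paper fixes an arbitrary $T \subseteq \cB$, writes $f(c) = \pr{}{B \in T \mid C = c}$, uses conditional independence to replace $\pr{}{B \in T \mid A=a, C=c}$ by $f(c)$, and then bounds the difference of expectations by $\dtv(C\cond{A=a}, C)$ before taking the supremum over $T$---exactly your explicit ``$g(c)$'' version. Your alternative packaging via a Markov kernel $\kappa$ and the data-processing inequality is just the abstract name for the same computation.
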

\begin{proof}
	Let $T$ be an arbitrary subset of $\cB$, then we have
	\begin{align*}
	&\pr{}{B \in T \mid A = a} - \pr{}{B \in T} \\
	={} &\ex{c \sim C \cond{A=a}}{\pr{}{B \in T \mid A = a, C = c}} - \ex{c \sim C}{\pr{}{B \in T \mid C = c}} \\
	={} &\ex{c \sim C \cond{A=a}}{\pr{}{B \in T \mid C = c}} - \ex{c \sim C}{\pr{}{B \in T \mid C = c}} \tag{conditional independence} \\
	\leq{} &\sup_{f \from \cC \to [0,1]} \ex{c \sim C \cond{A=a}}{f(c)} - \ex{c \sim C}{f(c)} \\
	={} &\dtv(C \cond{A=a}, C)
	\end{align*}
	where the final inequality is because $f(c) = \pr{}{B \in T \mid C = c}$ is a function mapping $\cC \to [0,1]$.  Therefore we have
	$$
	\dtv(B\cond{A=a},B) = \sup_{T} \pr{}{B \in T \mid A = a} - \pr{}{B \in T} \leq \dtv(C \cond{A=a},C),
	$$
	as desired.
\end{proof}

\begin{lem} [Lemma~\ref{lem:approx-to-pure} Restated]
If $M \from \cX \to \cR$ is $(\eps,\delta)$-differentially private, then there is a randomizer $M'$ that is $(2\eps,0)$-differentially private such that
$$
\forall x \in \cX~~\dtv(M(x),M'(x)) \leq \delta
$$
\end{lem}
\begin{proof}
    Fix an arbitrary element $\overline{x} \in \cX$. We define $M'(\overline{x})$ to have the same distribution as $M(\overline{x})$.
    
    For any other $x\in \cX$, a lemma of Kairouz, Oh, and Viswanath~\cite{KairouzOV15}\footnote{See also Murtagh and Vadhan~\cite[Lemma 3.2]{MurtaghV16} for the precise form we use.} implies that there exists a tuple of distributions $(\tilde{M}_{0}^{x, \overline{x}},\tilde{M}_{1}^{x, \overline{x}},\tilde{M}_{\bot}^{x, \overline{x}},\tilde{M}_{\top}^{x, \overline{x}})$ where
    \begin{align*}
        &M(x) = \left(   \frac{e^{\eps} (1-\delta)}{1 + e^{\eps}}\right) \tilde{M}_{0}^{x, \overline{x}} + \left(\frac{1 - \delta}{1 + e^{\eps}}\right)\tilde{M}_{1}^{x, \overline{x}} + \delta \tilde{M}_{\bot}^{x, \overline{x}} \\
        &M(\overline{x}) = \left(   \frac{1-\delta}{1 + e^{\eps}}\right) \tilde{M}_{0}^{x, \overline{x}} + \left(\frac{e^{\eps}(1 - \delta)}{1 + e^{\eps}}\right)\tilde{M}_{1}^{x, \overline{x}} + \delta \tilde{M}_{\top}^{x, \overline{x}}
    \end{align*}
    With this context, we define $M'(x)$ to be the distribution $$M'(x) := \left(   \frac{e^{\eps} (1-\delta)}{1 + e^{\eps}}\right) \tilde{M}_{0}^{x, \overline{x}} + \left(\frac{1 - \delta}{1 + e^{\eps}}\right)\tilde{M}_{1}^{x, \overline{x}} + \delta \tilde{M}_{\top}^{x, \overline{x}}.$$

    By construction, we have
    $$
        \forall x \in \cX~~\dtv(M(x),M'(x)) \leq \delta
    $$  
    Also by construction, we have
    $$
    \forall R \subseteq \cR~~ e^{-\eps} \leq \frac{\pr{}{M'(x) \in R}}{\pr{}{M'(\overline{x}) \in R}} \leq e^{\eps}
    $$
    which implies that, for every pair $x,x' \in \cX$, we have
    $$
    \forall R \subseteq \cR~~ \frac{\pr{}{M'(x) \in R}}{\pr{}{M'(x') \in R}} \leq e^{2 \eps},
    $$
    as desired.    
\end{proof}

\section{Proofs of Other Supporting Statements}

\begin{clm}[Claim \ref{clm:P-separated} Restated]
For any $\bP\neq\bP' \in \{\bU\} \cup \cP_{d,1,\alpha}$, $\dtv(\bP,\bP') \geq \alpha$.
\end{clm}
\begin{proof}
We first compute the distance between the uniform distribution and $\bP_{d,\{j\},b,\alpha}$ (for generic $j\in[d]$ and $b\in\pmo$):
\begin{align*}
\dtv(\bU, \bP_{d,\{j\},b,\alpha}) &= \half \norm{\bU - \bP_{d,\{j\},b,\alpha}}{1} \\
    &= \half \left( \sum_{x \in \cX, x_j = b} |2^{-d} - (1+2\alpha)2^{-d}| + \sum_{x \in \cX, x_j = -b} |2^{-d} - (1-2\alpha)2^{-d}| \right) \\
    &= \half \left( \alpha \cdot 2^{-d+1} \cdot 2^{d-1} + \alpha \cdot 2^{-d+1} \cdot 2^{d-1} \right) \\
    &= \alpha
\end{align*}

For any $j,j' \in [d]$ and any $b,b'\in \pmo$, we calculate the distance $\dtv(\bP_{d,\{j\},b,\alpha},\bP_{d,\{j'\},b',\alpha})$ via case analysis. In the case where $j\neq j'$, 
\begin{align*}
\dtv(\bP_{d,\{j\},b,\alpha},\bP_{d,\{j'\},b',\alpha}) ={}& \half \norm{\bP_{d,\{j\},b,\alpha} - \bP_{d,\{j'\},b',\alpha} }{1} \\
    ={}& \half \cdot \sum_{x_j=b \atop x_{j'}=b'} |(1+2\alpha)2^{-d} -(1+2\alpha)2^{-d}| + \half \cdot \sum_{x_j\neq b \atop x_{j'}\neq b'} |(1-2\alpha)2^{-d} -(1-2\alpha)2^{-d}| \\
    &+ \half\cdot \sum_{x_j=b \atop x_{j'}\neq b'} |(1+2\alpha)2^{-d} -(1-2\alpha)2^{-d}| + \half\cdot \sum_{x_j\neq b \atop x_{j'}= b'} |(1-2\alpha)2^{-d} -(1+2\alpha)2^{-d}| \\
    ={}& \half\cdot \sum_{x_j=b \atop x_{j'}\neq b'} \alpha\cdot 2^{-d+2} + \half\cdot \sum_{x_j\neq b \atop x_{j'}= b'} \alpha\cdot 2^{-d+2} \\
    ={}& \half \left(\alpha\cdot 2^{-d+2} \cdot 2^{d-2} + \alpha\cdot 2^{-d+2} \cdot 2^{d-2} \right)\\
    ={}& \alpha
\end{align*}

In the case where $j=j'$ but $b\neq b'$, we take $b=+1$ and $b'=-1$ without loss of generality.
\begin{align*}
\dtv(\bP_{d,\{j\},+1,\alpha},\bP_{d,\{j'\},-1,\alpha}) ={}& \half \norm{\bP_{d,\{j\},+1,\alpha} - \bP_{d,\{j'\},-1,\alpha} }{1} \\
    ={}& \half \left( \sum_{x_j=+1} |(1+2\alpha)2^{-d} - (1-2\alpha)2^{-d}| + \sum_{x_j=-1} |(1-2\alpha)2^{-d} - (1+2\alpha)2^{-d}| \right) \\
    ={}& \half \left( \alpha \cdot 2^{-d+2} \cdot 2^{d-1} + \alpha \cdot 2^{-d+2} \cdot 2^{d-1} \right) \\
    ={}& 2\alpha \qedhere
\end{align*}
\end{proof}

\begin{lem}[Lemma \ref{lem:Q-sample-bound}, Restated]
For every $d \in \N$, $k \leq d$, and $\alpha \in [0,\nicehalf]$,
    $$
        \| \cQ_{d,k,\alpha} \|_{\infty \to 2}^{2} \leq \frac{4\alpha^2}{\binom{d}{\leq k}}
    $$
\end{lem}

\begin{proof}
The proof proceeds almost identically with the proof of Lemma \ref{lem:P-sample-bound}. Recall that we now take $\cX=\pmo^{d+1}$. We begin by expanding the definition of the $(\infty \to 2)$ norm:
\begin{align*}
\| \cQ_{d,k,\alpha} \|_{\infty \to 2}^{2} &= \sup_{f \from \cX \to [\pm 1]} \sum_{\bQ \in \cQ_{d,k,\alpha}} \frac{1}{|\cQ_{d,k,\alpha}|} \cdot \left( \ex{x \sim \bQ}{f(x)} - \ex{x \sim \bU}{f(x)}  \right)^2 \\
    &= \sup_{f \from \cX \to [\pm 1]} \sum_{t\subseteq [d], |t|\leq k \atop b\in \{\pm 1\}} \frac{1}{|\cQ_{d,k,\alpha}|} \cdot \left( \sum_{x \in \cX } f(x)\cdot ( \bQ_{d,t,b,\alpha}(x) - \bU(x)) \right)^2 \\
    &= \sup_{f \from \cX \to [\pm 1]} \frac{1}{2\binom{d}{\leq k} + 2} \cdot \sum_{t\subseteq [d], |t|\leq k \atop b\in \{\pm 1\}} \left( \sum_{x \in \cX } f(x)\cdot ( \bQ_{d,t,b,\alpha}(x) - \bU(x)) \right)^2 \stepcounter{equation} \tag{\theequation} \label{eq:Q-sample-bound-1}
\end{align*}
The final equality comes from Fact \ref{fact:Qdk-size}.
Note that \eqref{eq:Qtb} is equivalent to $\bQ_{d,t,b,\alpha}(x) = (1 + 2\alpha b \cdot \prod_{i \in t} x_i \cdot x_{d+1})2^{-d-1}$. We also have from Fact \ref{fact:Qdk-mixture} that $\bU(x)=2^{-d-1}$. Thus,
\begin{align*}
\eqref{eq:Q-sample-bound-1} &= \sup_{f \from \cX \to [\pm 1]} \frac{1}{2\binom{d}{\leq k} + 2} \cdot \sum_{t\subseteq [d], |t|\leq k \atop b\in \{\pm 1\}} \left(  \sum_{x \in \cX } f(x)\cdot 2\alpha b \cdot \prod_{i \in t} x_i \cdot 2^{-d-1} \right)^2  \\
    &= \sup_{f \from \cX \to [\pm 1]} \frac{2\alpha^2}{\binom{d}{\leq k}+1} \cdot \sum_{t\subseteq [d], |t|\leq k \atop b\in \{\pm 1\}} \left( \sum_{x \in \cX } f(x) \cdot \prod_{i \in t} x_i \cdot 2^{-d-1} \right)^2 \\
    &= \sup_{f \from \cX \to [\pm 1]} \frac{4\alpha^2}{\binom{d}{\leq k}+1} \cdot \sum_{t\subseteq [d], |t|\leq k} \left( \sum_{x \in \cX } f(x) \cdot \prod_{i \in t} x_i \cdot 2^{-d-1} \right)^2 \\
    &\leq \sup_{f \from \cX \to [\pm 1]} \frac{4\alpha^2}{\binom{d}{\leq k}} \cdot \sum_{t\subseteq [d]} \left( \sum_{x \in \cX } f(x) \cdot \prod_{i \in t} x_i \cdot 2^{-d-1} \right)^2 \stepcounter{equation} \tag{\theequation} \label{eq:Q-sample-bound-2}
\end{align*}
Define $\hat{f}(t) := \ex{X\sim \bU}{ f(X)\cdot \prod_{i\in t} X_i}$, the Fourier transform over the Boolean hypercube. This is precisely the term being squared above. So we have
\begin{align*}
\eqref{eq:Q-sample-bound-2} &= \frac{4\alpha^2}{\binom{d}{\leq k}} \cdot \sup_{f \from \cX \to [\pm 1]} \sum_{t\subseteq [d]} \hat{f}(t)^2 \\
    &= \frac{4\alpha^2}{\binom{d}{\leq k}} \cdot \sup_{f \from \cX \to [\pm 1]} \ex{X\sim \bU}{f(X)^2} \tag{Parseval's identity} \\
    &\leq \frac{4\alpha^2}{\binom{d}{\leq k}}
\end{align*}
This concludes the proof.
\end{proof}

\end{document}